\newtheorem{remark}{Remark}
\newtheorem{example}{Example}
\newtheorem{theorem}{Theorem}
\newtheorem{lemma}{Lemma}
\newcommand{\F}{\mathbb{F}}
\newcommand\mydef{\coloneqq}
\DeclareMathOperator{\RS}{RS}
\newcommand{\FF}{\F}
\newcommand\bfA{\bm{A}}
\newcommand\bfB{\bm{B}}
\newcommand\bfC{\bm{C}}
\newcommand\bfD{\bm{D}}
\newcommand\bfE{\bm{E}}
\newcommand\bfM{\bm{M}}
\newcommand\bfQ{\bm{Q}}
\newcommand\bfR{\bm{R}}
\newcommand\bfS{\bm{S}}
\newcommand\bfT{\bm{T}}
\newcommand\bfX{\bm{X}}
\newcommand\bfa{\bm{a}}
\newcommand\bfb{\bm{b}}
\newcommand\bfe{\bm{e}}
\newcommand\bfu{\bm{u}}
\newcommand\bfv{\bm{v}}
\newcommand\bfw{\bm{w}}
\newcommand\bfx{\bm{x}}
\newcommand\bfy{\bm{y}}
\newcommand\bfgamma{\bm{\gamma}}
\newcommand\bfPsi{\bm{\Psi}}
\newcommand\bfzero{\bm{0}}
\newcommand\calA{\mathcal{A}}
\newcommand\calC{\mathcal{C}}
\newcommand\bcdot{\boldsymbol{\cdot}}
\tikzstyle{zcorner} = [draw, ellipse, minimum size=10pt, align=center]
\pgfplotsset{compat=1.14}
\begin{document}

\title{Private Information Retrieval Schemes with Regenerating Codes}

\author{
  \IEEEauthorblockN{
    Julien Lavauzelle\IEEEauthorrefmark{1}\IEEEauthorrefmark{2},
    Razane Tajeddine\IEEEauthorrefmark{1}\IEEEauthorrefmark{4},
    Ragnar Freij-Hollanti\IEEEauthorrefmark{4},
    Camilla Hollanti\IEEEauthorrefmark{4}}\\
  \IEEEauthorblockA{\IEEEauthorrefmark{2}
    Laboratoire LIX, École Polytechnique, Inria \& CNRS UMR 7161, University Paris-Saclay,
    Palaiseau, France\\
    Email: julien.lavauzelle@inria.fr
  }\\
  \IEEEauthorblockA{\IEEEauthorrefmark{4} Department of Mathematics and Systems Analysis, 
    Aalto University School of Science, 
    Espoo, Finland\\
    Emails: \{razane.tajeddine, ragnar.freij, camilla.hollanti\}@aalto.fi}\\
  \thanks{\IEEEauthorblockA{\IEEEauthorrefmark{1}: Both authors contributed equally to this manuscript.}}
}

\maketitle

\begin{abstract}
%\RED{Cami: Write in passive tense. Regenerating code is by definition optimal in terms of the storage-bandwidth tradeoff, so dont say "optimal reg code" but just "reg code" (I tuned the definition below accordingly). Among all reg codes, the best tradeoff is achieved when $d=n-1$, so if one says optimal reg code, it should refer to this case, which it doesn't in our case (we consider any $d$ for which PM can be constructed).}
%\BLUE{Julien: From what I read before (e.g. Rashmi et al.), regenerating codes are not necessarily optimal in terms of storage-bandwidth tradeoff. But MSR and MBR codes are, of course.}
%\GREEN{Ragnar: I would also prefer it if the term "regenerating codes" were not reserved for optimal ones. But of course, if it is standard in the literature, let's stick to that.}\RED{cami: Let's stick with the original definition of Dimakis, that is also how it's mostly used. Rashmi makes an exception.}

  A private information retrieval (PIR) scheme allows a user to retrieve a file from a database without revealing any information on the file being requested. As of now, PIR schemes have been proposed for several kinds of storage systems, including replicated and MDS-coded data. In this paper, the problem of constructing a PIR scheme on regenerating codes is considered. 
  
  A regenerating code is a storage code whose codewords are distributed among $n$ nodes, enabling efficient storage of files, as well as low-bandwidth retrieval of files and repair of nodes. In this work, a PIR scheme on regenerating codes is constructed, using the product-matrix (PM) framework of Rashmi, Shah and Kumar. Both the minimum-bandwidth (MBR) and minimum-storage (MSR) settings are considered, and the structure given by the PM framework is used in order to reduce the download communication complexity of our schemes.
\end{abstract}

\section{Introduction}

Private information retrieval (PIR) allows a user to retrieve a file from a storage system without revealing what file she is interested in. The problem of constructing PIR schemes was introduced by Chor, Goldreich, Kushilevitz and Sudan~\cite{PIR1995, chor1998private}, where data was considered to be replicated on multiple servers. In the first model, it was assumed that the data is a bitstring $\bfx \in \{ 0,1 \}^m$, and the user would like to retrieve a bit $x^f$ without revealing the index $f$ to the servers. Since its introduction, much work has been done on the replicated data model~\cite{SunJ17, SunJ18a, yekhanin2010private, beimel2001information, beimel2002breaking}. The asymptotic capacity for a PIR scheme over a storage system where the files are replicated on $n$ servers was found to be $1-1/n$ \cite{SunJ17}.

On the other hand, there is a lot of interest in using codes for storage in order to minimize storage overhead. As a consequence, many works also considered the PIR model where the data is not replicated but coded and distributed over multiple servers, see \emph{e.g.} \cite{shah2014one, chan2014private, BanawanU18, fazeli2015pir, blackburn2016pir,tajeddine2016private, FreijGHK17,kumar2017achieving, freij2018t}. The asymptotic capacity for a PIR scheme for a storage system where the files are coded on multiple servers using an $[n,k]$ MDS code was found to be $1-k/n$ \cite{BanawanU18}. The present work will focus on the case of \emph{regenerating codes} as storage codes.

Regenerating codes are a class of codes dedicated to distributed storage, achieving the optimal tradeoff between the bandwidth needed for a node repair and the amount of data each node needs to store. These codes were pioneered by Dimakis \emph{et al.}~\cite{Dimakis} who notably produced a cut-set bound on the parameters of the codes. This bound materializes two interesting optimal settings: one for which the repair communication cost in minimized, called the minimum-bandwidth regenerating (MBR) point, and one for which the nodes store the least data, called the minimum-storage regenerating (MSR) point. Rashmi \emph{et al.}~\cite{rashmi2011optimal} then proposed optimal constructions for these two specific settings, based on the so-called \emph{product-matrix} (PM) framework. Many other works followed, including \cite{SuhR11, KamathSPRLKKV13, RavivSE16} for the construction of MBR/MSR codes. Also notice that security against eavesdroppers in regenerating codes have been intensively studied, \emph{e.g.} in \cite{ShahRK11, PRK11}.

In this paper, we propose PIR schemes for the optimal PM constructions of Rashmi \emph{et al.}~\cite{rashmi2011optimal} in both MBR and MSR settings. The protocols we give use the symmetry and the redundancy inherent to the PM constructions, in order to decrease the number of symbols downloaded from the servers. As a consequence, we outperform the very recent constructions of PIR schemes over PM codes given by Dorkson and Ng in~\cite{DorksonN18, DorksonN18b}, which represent the only existing works on PIR schemes for MBR/MSR codes, to the best of our knowledge.

Concerning PM-MBR codes, we obtain a PIR rate strictly larger than $1 - \frac{k}{n}$, where $n$ is the total number of servers and $k$ is the smallest number of servers it is necessary to contact in order to retrieve a file in a regenerating code. This can be compared to the capacity of \emph{scalar} $[n, k]$ MDS-coded PIR schemes for an unbounded number of messages, which is exactly $1 - \frac{k}{n}$~\cite{chan2014private, BanawanU18}. Thus, this presents another incentive to use MBR codes for storage systems. It is important to note that, though our result might seem contradictory, PM codes are \emph{vector} codes, hence the bound in~\cite{chan2014private, BanawanU18} does not apply. In this work, the PIR rate we obtain remains below $1 - \frac{k}{n} + \frac{k(k-1)}{2nd}$, which can be considered as an upper bound on the capacity of PIR schemes based on $(n, k, d)$ MBR codes\footnote{Indeed, under the constraint $\beta = 1$, a PM-MBR code is an $[nd, B]$ linear code over $\FF_q$, where $B = kd - \frac{k(k-1)}{2}$. Moreover it is known that $1 - \frac{B}{nd}$ is an upper bound on the PIR capacity of an $[nd, B]$ linear code with such parameters, since it is the capacity of an $[nd,B]$ MDS code~\cite{BanawanU18}.}.

In the PM-MSR setting, we construct a PIR scheme similar to the scheme in the PM-MBR setting, where we consider $d=2k-2$ for simplicity. The PIR scheme achieves a PIR rate which is between $1-d/n$, the rate obtained by Dorkson and Ng~\cite{DorksonN18} which is also the PIR capacity of an $[n,d]$ MDS code, and $1-k/n$, the PIR capacity of an $[n,k]$ MDS code. %\BLUE{[explain our result here]}.

\section{Preliminaries}

\subsection{Notation and definitions}
\label{subsec:notation}

For $\bfa, \bfb \in \F_q^n$, we denote their inner product by $\langle \bfa , \bfb \rangle \mydef \sum_{i=1}^n a_i b_i\in\F_q$ and their component-wise (star) product by $\bfa \star \bfb \mydef (a_1 b_1, \dots, a_n b_n) \in \FF_q^n$. For $I \subset [1,n]$, we denote by $\bfa_{|I}$ the tuple obtained by restricting $\bfa$ to coordinates in $I$.  The \emph{Reed-Solomon code} of dimension $k$ with distinct evaluation points $\bfx = (x_1, \dots, x_n)$, where $x_i \in \F_q$, is defined by
\[
\RS_k(\bfx) \mydef \{ (f(x_1), \dots, f(x_n)), f \in \FF_q[X], \deg f \le k-1 \} \subseteq \FF_q^n\,.
\]
It is well-known that for any $1 \le k \le n$, the code $\RS_k(\bfx)$ is maximum-distance separable (MDS), and that $\RS_j(\bfx) \subseteq \RS_k(\bfx)$ for every $j \le k$. Therefore there exists a basis $\Gamma = \{ \bfgamma_1, \dots, \bfgamma_k \}$ of $\RS_k(\bfx)$, such that, for every $j \le k$ and every subset $I \subset [1,n]$ for cardinality $|I| \ge j$, the family $\Gamma^{(I,j)} \mydef \{ (\bfgamma_1)_{|I}, \dots, (\bfgamma_j)_{|I} \}$ is a basis of $\RS_j(\bfx_{|I}) \subseteq \FF_q^{|I|}$. For instance, one can take a degree-ordered monomial basis, explicitly given by $\bfgamma_j \mydef (x_1^j, \dots, x_n^j) \in \FF_q^n$.

Throughout this paper, we will refer to the asymptotic PIR capacity simply as the PIR capacity, as this is the only definition of PIR capacity we consider.

The \emph{Vandermonde matrix} with distinct basis elements $\bfx \in \F_q^n$ is the $n \times k$ matrix $\bfPsi \in \F_q^{n \times k}$ such that $\Psi_{i,j} = x_i^j$ for $1 \le i \le n$ and $1 \le j \le k$. We know that $\bfPsi$ generates the code $\RS_k(\bfx)$ by columns. More precisely, these columns form the monomial basis we mentioned earlier.

The nomenclature used in this paper is summarized in the following table.

\begin{table}[htb]
\caption{NOMENCLATURE} \label{tab:title} 
\vspace{-1em}
\begin{center}
\begin{tabular}{|c|p{10cm}|}
\hline
    $\calC$ & Regenerating code \\\hline
    $F$ & Number of files\\\hline
    $n$ & Number of servers\\\hline
    $k$ & Reconstruction parameter of the regenerating code\\\hline
    $d$ & Repair parameter of the regenerating code\\\hline
    $B$ & Number of symbols in a regenerating codeword\\\hline
    $\alpha$ & Storage capacity of a single server \\\hline
    $\beta$ & Repair-bandwidth of a single server \\\hline
    $\bfX = (\bfX^1, \dots, \bfX^F)$ & Set of files (database) \\\hline
    $\bfX^{f_0}$ & Specific file requested by the user\\\hline
    $\bfM^f$ & Redundant arrangement of file $\bfX^f$ in a matrix, as in the PM framework\\\hline
    $\bfC^f$ & Regenerating codeword associated to $\bfX^f$, as stored on the DSS\\\hline
    $\bfC^f[\bcdot, \bcdot, s]$ & $s$-th stripe of codeword $\bfC^f$ \\\hline
    $\bfC^f[i, \bcdot, s]$ & Sub-array of $\bfC^f[\bcdot, \bcdot, s]$ stored by $i$-th server\\\hline
    $\bfC^f[i, j, s]$ & $j$-th $\FF_q$-symbol of sub-array $\bfC^f[i, \bcdot, s]$ \\\hline
    $\bfQ_\ell $ & $\ell$-th query sent to servers\\\hline
    $R$ & Rate of a PIR scheme\\\hline
    $H(\cdot)$ & Entropy function\\\hline
\end{tabular}
\end{center}
\vspace{-1em}
\end{table}
% \GREEN{Ragnar: Should we include $\bfR$ for responses in the table?}

% \textcolor{blue}{[Julien: the following could be useless]}
% If $C = (C_{i,j}^f)$, for $1 \le i \le n, 1 \le j \le \alpha, 1 \le f \le m,$ is a $3$-dimensional data structure, we adopt a convenient notation for representing its substructures. For instance, matrix $(C^f_{i,j})_{\substack{1 \le i \le n\\ 1 \le j \le \alpha}}$ is denoted $C^f_{\bullet,\bullet}$, meaning that $\bullet$ represents a varying index and $f$ is fixed. Similarly, $C^\bullet_{i,j}$ is the $1$-dimensional vector $(C^f_{i,j})_{1 \le f \le m}$.

\subsection{Private information retrieval}

Consider a scheme between a user and $n$ servers storing an encoded version of $F$ files $\bfX^1, \dots, \bfX^F$. In the scheme, \emph{queries} $\bfQ[1], \dots, \bfQ[n]$ are sent to servers, which in return compute \emph{responses} $\bfR[1], \dots, \bfR[n]$ accordingly. Now, assume the user wants to retrieve a specific file $\bfX^{f_0}$, for $1 \le f_0 \le F$. We say the scheme achieves information-theoretic PIR against non-colluding servers, if the following requirements hold:
\[
\begin{array}{ll}
  \text{Privacy: } & H(f_0 \mid \bfQ[i] ) = H(f_0), \quad i=1,\dots,n.\\
  \text{Recovery: } & H(\bfX^{f_0} \mid \bfR[1], \dots, \bfR[n] ) = 0\,.
\end{array}
\]
Here, $H(\cdot)$ denoted the entropy function. Concerning the recovery constraint, it is also desirable that the user is able to reconstruct $\bfX^{f_0}$ explicitly from $\bfR[1], \dots, \bfR[n]$. Finally, we define the (download) PIR rate of a scheme by $R \mydef \frac{|\bfX^{f_0}|}{\sum_i |\bfR[i]|}$ where $|\cdot|$ represents the bitsize of a vector. The PIR capacity is the maximum achievable PIR rate.

\subsection{Regenerating codes}

Regenerating codes were introduced by Dimakis~\emph{et al.}\ in the context of distributed storage~\cite{Dimakis}. In an $(n, k, d, B, \alpha, \beta)$ regenerating code, a coded version of a file of size $B$ is stored on $n$ servers (or nodes), each storing $\alpha$ symbols. Besides, two additional constraints are required. The first is to give any external user the ability to retrieve the file by contacting any subset of $k$ servers. The second is to allow repair of any failed server by contacting any subset of $d \ge k$ servers and downloading $\beta$ symbols from each, \emph{i.e.}, $\gamma \mydef \beta d$ symbols in total. Parameters of regenerating codes are sometimes shortly denoted $(n,k,d)$, but one should take care that $d$ is \emph{not} the minimum distance of the code, and $k$ is \emph{not} the dimension of the code. 

Dimakis \emph{et al.}~\cite{Dimakis} proved that any storage (erasure) code must satisfy the so-called \emph{cut-set bound}
\begin{equation}
  \label{eq:bound-regenerating-codes}
B \le \sum_{i=0}^{k-1} \min\{\alpha, (d-i)\beta\}\,,
\end{equation}
and codes achieving this bound are called \emph{regenerating codes}. Dimakis \emph{et al.}\ also showed that equality in~\eqref{eq:bound-regenerating-codes} defines a tradeoff between parameters $\alpha$ and $\gamma = \beta d$, which cannot be minimized simultaneously. Optimal codes minimizing $\gamma = \beta d$ reach the minimum-bandwidth regeneration (MBR) point, while those minimizing $\alpha$ attain the minimum-storage regeneration (MSR) point.

\subsection{Product-Matrix constructions}

 In this work, we focus on the regenerating codes built by Rashmi~\emph{et al.}\ in~\cite{rashmi2011optimal}, through the \emph{product-matrix} (PM) framework. In their constructions, the authors set $\beta = 1$ without loss of generality, since regenerating codes with $\beta \ne 1$ can be built by \emph{striping} files in regenerating codes with $\beta = 1$. Therefore, for convenience we also consider the setting $\beta=1$ in what follows.

\subsubsection{PM codes in the MBR setting}

%\BLUE{[Julien: I think that, \emph{for the non-colluding case,} the PIR scheme we propose works over regenerating codes based on \emph{any} MDS code. However for the colluding case, RS codes seems to behave in a better way. Whence my question: which setting should be used?]} \textcolor{green!50!black}{You are right, depends surely on whether we want collusion or not.} 

At the MBR point with $\beta = 1$, we have the following constraints on the parameters:
\begin{equation*}
  \label{eq:parameters-MBR}
\alpha = d \quad\text{ and }\quad B = k(d-k) + \frac{k(k+1)}{2}\,.
\end{equation*}

The construction of Rashmi \emph{et al.}~\cite{rashmi2011optimal} can be presented as follows. Firstly, file (message) symbols are arranged in a $d \times d$ matrix
\begin{equation}
  \label{eq:matrix-MBR}
  \bfM = 
  \begin{pmatrix}
    \bfS    & \bfT \\
    \bfT^\top  & \bfzero
  \end{pmatrix}
\end{equation}
where $\bfS$ is a $k \times k$ symmetric matrix containing $\frac{k(k+1)}{2}$ distinct file symbols, and $\bfT$ is a $k \times (d-k)$ matrix containing the remaining $k(d-k)$ file symbols. Let now $\bfPsi$ be an $n \times d$ Vandermonde matrix over a large enough finite field $\F_q$. The code is defined as $\calC\mydef\bfPsi\bfM\in\FF_q^{n\times d}$. 
%\RED{Cami: what is $M^f$? Define.}
%The encoding of file $X^f$ is defined as $\bfC^f \mydef \bfPsi \bfM^f \in \FF_q^{n \times d}$.
% \RED{cami: ``for some fixed $m_i$''. (This equation defines a whole codebook when varying $m_i$, i.e., we can define many files w this. Explain more precisely. Denote a specific file (codeword) by $\mathbf{c}\in \mathbf{C}$). Take note on later instances as well!}.
The $j$-th row of a codeword in $\calC$
% \RED{$\mathbf{c}\in$} $\bfC$ \BLUE{[the codeword is denoted $\bfC$, see above]}
is stored on server $S_j$, for $j=1, \dots, n$, and contains at most $\alpha = d$ information symbols.
%Throughout the paper, the above construction is referred to as a PM-MBR code, and denoted by $\calC =\bfPsi \bfM \subseteq \FF_q^{n \times d}$.
%\RED{You just defined it as $\bfC$, why do you suddenly change to $\mathcal{R}$? Then later you still/again talk about $\bfC$.}
%\BLUE{In my mind, $\bfC$ is a codeword of the regenerating code $\calR$. It is defined by the equation $\bfC = \bfPsi \bfM$, where $\bfM$ is an arrangement of the symbols of the original file. But maybe it was not clear enough.}.
%\GREEN{Ragnar: I have no problem reading the presentation above, such that $C$ is a codeword in a code $\calR$, but I'm not sure why we use this notation. If we do, then $\calR$ should be included in the nomenclature table.}
%\BLUE{[Julien: Now, the regenerating code is $\calC$].}
Notice that $\calC$ is an $[nd, B]$ linear code over $\FF_q$. For clarity, let us now rewrite the example given by the authors in~\cite[Sec. IV.A.]{rashmi2011optimal}.
\begin{example}[Optimal PM-MBR code]
  \label{ex:MBR}
  Consider the setting $(n, k, d) = (6, 3, 4)$ over the field $\F_7$. The original file contains $B = k(d-k)+\frac{k(k+1)}{2} = 9$ symbols. Let $\bfx = (1, 2, 3, 4, 5, 6) \in \FF_7^6$. The generator (Vandermonde) matrix and the message matrix are then given as:
  \[
  \bfPsi = 
  \begin{pmatrix}
    1 & 1 & 1 & 1 \\
    1 & 2 & 4 & 1 \\
    1 & 3 & 2 & 6 \\
    1 & 4 & 2 & 1 \\
    1 & 5 & 4 & 6 \\
    1 & 6 & 1 & 6 \\
  \end{pmatrix}
  ,\;
  \bfM=\begin{pmatrix}
    m_1 & m_2 & m_3 & m_7 \\
    m_2 & m_4 & m_5 & m_8 \\
    m_3 & m_5 & m_6 & m_9 \\
    m_7 & m_8 & m_9 & 0   \\
  \end{pmatrix}.
  \]
\end{example}

\subsubsection{PM codes in the MSR setting}
\label{subsubsec:def-MSR}

In the MSR setting with $\beta = 1$, parameters $\alpha$ and $B$ are given by:
\begin{equation*}
  \label{eq:parameters-MSR}
\alpha = d-k+1 \quad\text{ and }\quad B = k(d-k+1)\,.
\end{equation*}

In~\cite{rashmi2011optimal}, the authors construct PM codes at the MSR point, for $d \geq 2k -2$. In this setting, $d \leq 2\alpha$ and $B \leq \alpha(\alpha+1)$. In this work, for simplicity, we assume %the case where
$d = 2k -2$ as it is the case for the first construction given in~\cite{rashmi2011optimal}. Thus, $d$ and $B$ can be simplified as $d = 2\alpha$ and $B = \alpha(\alpha+1)$. Note that the scheme we propose further in Section~\ref{sec:PIR-MSR} can be easily generalized to the case where $d \geq 2k -2$.

File symbols are  arranged in a $2\alpha \times \alpha$ matrix
\[
\bfM = 
\begin{pmatrix}
  \bfS_1 \\
  \bfS_2
\end{pmatrix}
\]
where each $\bfS_i$ is an $\alpha \times \alpha$ symmetric matrix containing $\frac{\alpha(\alpha+1)}{2}$ file symbols.  Let $\bfPsi$ be an $n \times 2\alpha$ Vandermonde matrix over $\FF_q$. As in the MBR setting, the $j$-th row of a codeword from the code $\calC\mydef \bfPsi \bfM$ is stored on server $S_j$, for $j=1, \dots, n$.

This construction is referred to as PM-MSR codes. Let us also rewrite the example given in~\cite[Sec. V.A.]{rashmi2011optimal}.
\begin{example}[Optimal PM-MSR code]
  \label{ex:MSR}
  Consider the setting $(n, k, d) = (6, 3, 4)$ over $\F_{13}$, which gives the file size $B = 6$. Let $\bfx = (1, 2, 3, 4, 5, 6) \in \FF_{13}^6$. Matrices $\bfPsi$ and $\bfM$ are then given by:
  \[\bfPsi = 
  \begin{pmatrix}
    1 & 1 &  1 &  1 \\
    1 & 2 &  4 &  8 \\
    1 & 3 &  9 &  1 \\
    1 & 4 &  3 & 12 \\
    1 & 5 & 12 &  8 \\
    1 & 6 & 10 &  8 \\
  \end{pmatrix}
  ,\; \bfM=
  \begin{pmatrix}
    m_1 & m_2  \\
    m_2 & m_3  \\
    m_4 & m_5  \\
    m_5 & m_6  \\
  \end{pmatrix}.
  \]
\end{example}

\section{A PIR scheme in the MBR setting}

In this section, we consider a PM-MBR code $\calC$
%\RED{cami: Ok I guess you define M for fixed numbers, not variables, and then C is your codeword, and R is your code.
%\BLUE{[That's it.]}
%This is pretty non-standard. Why not just define C as the code w variables $m_i$, and no need to define a code R.} \BLUE{Well, actually the problem is that I needed to make a difference between the whole regenerating code (of length $n \times d$) and the \enquote{column-code} which is basically a (subcode of a) RS code of length $n$. But again, maybe the notation was awkward.}
%\GREEN{Ragnar: Couldn't we then just use $C^S$ for the column code (now denoted $\calR^S$)? Or do I misunderstand something?}
over $\FF_q$, with parameters $(n, k, d)$. Recall that $\calC$ is also a linear code over $\FF_q$ of length $nd$ and dimension $B = k(d-k) + \frac{k(k+1)}{2}$.

\subsection{System setup}

We consider a database $\bfX$ composed of $F$ files $\bfX^1, \dots, \bfX^F$, such that each $\bfX^f$ consists of $B = k(d-k) + \frac{k(k+1)}{2}$ information symbols. For every $1 \le f \le F$, the symbols of file $\bfX^f$ are subdivided into $S \ge 1$ \emph{stripes} (or subdivisions) and organized in a $3$-dimensional array $\bfM^f$ (that we abusively name a \emph{matrix}), such that 
\[
\bfM^f = \left( M^f[i,j,s],
  {
    \scriptsize
    \begin{array}{l}
      1 \le i \le d \\
      1 \le j \le d\\
      1 \le s \le S
    \end{array}
  }
\right) \in \mathbb{F}_q^{d \times d \times S},
\]
where for every $i, j, s, f$, we have $M^f[i,j,s] \in \F_q$. Following the PM framework, every stripe $\bfM^f[\bcdot, \bcdot, s]$ must the form given in~\eqref{eq:matrix-MBR}. Also notice that, by construction of the regenerating code $\calC$, for all $i, j, s, f$, we have:
\[
 M^f[i, j, s] = M^f[j, i ,s]\,,
\]
and 
\[
M^f[i, j, s] = 0 \;\; \text{ if } i \ge k+1 \text{ and } j \ge k+1\,.
\]
We also use the notation $\bfM \mydef (\bfM^1, \dots, \bfM^F)$.

For every $j, s, f$, the column $\bfM^f[\bcdot, j, s] \in \FF_q^d$ is encoded using a Reed-Solomon code $\RS_d(\bfx)$, resulting in a codeword
\[
 \bfC^f[\bcdot, j, s] = \sum_{r=1}^d M^f[r, j, s] \bfgamma_r,
\]
where we recall that $\Gamma = \{ \bfgamma_1, \dots, \bfgamma_d \}$ denotes a suitable basis for sequences of Reed-Solomon codes (see Section~\ref{subsec:notation}). Due to the form of message matrices $\bfM^f$, one can also remark that $\bfC^f[\bcdot, j, s] \in \RS_k(\bfx)$ if $j \ge k+1$.

\begin{figure}
  \centering
  \includegraphics{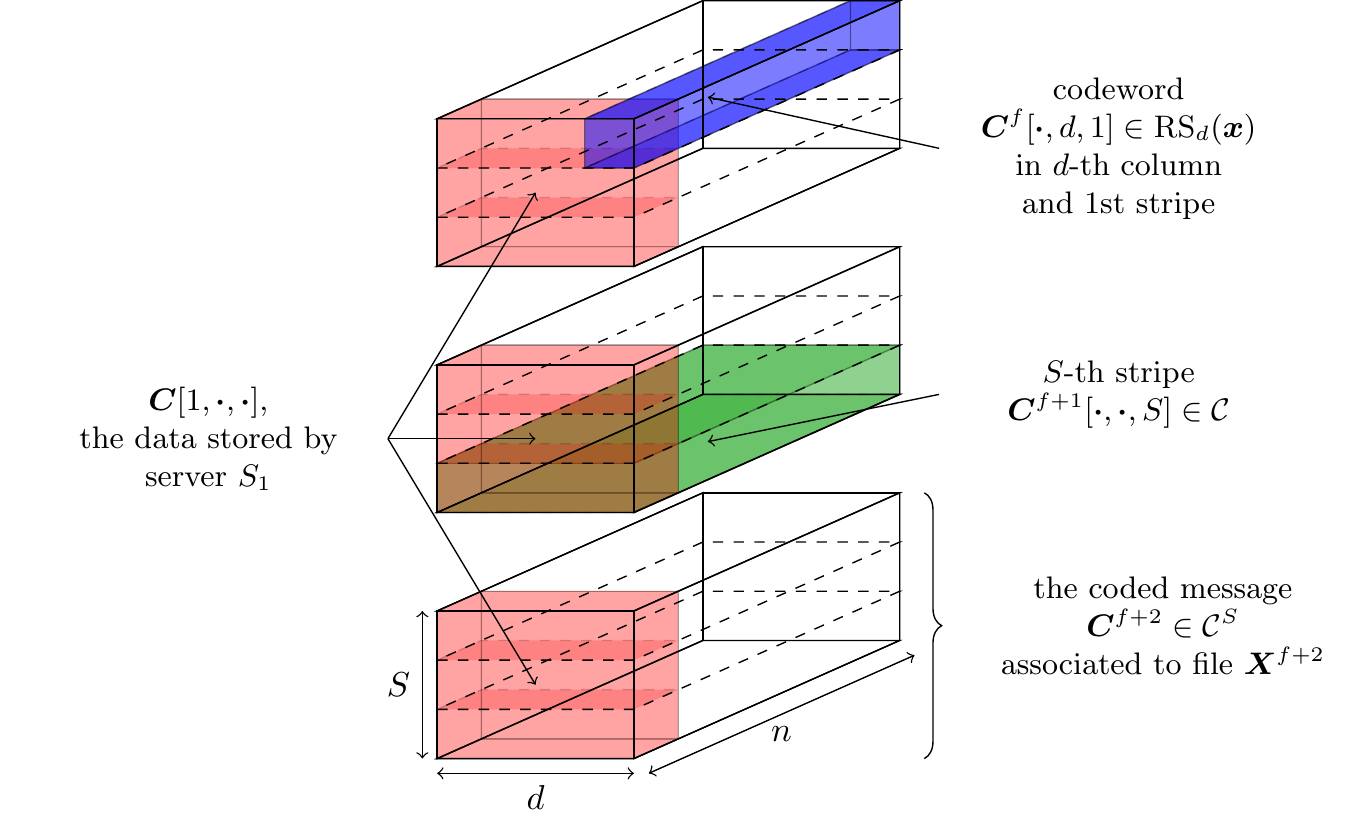}
  \caption{An illustration of the arrangement of files, stripes and codewords in the storage system. A system of $n$ servers stores encoded files represented by $S \times d \times n$ cuboids (in the figure, only three of them are represented). Foreground (red) blocks represent data stored by the first server. The horizontal block (in green) in the middle cuboid represents a stripe, which lies in the regenerating code $\calC$. Top right block (in blue) is a column of a stripe, and typically lies in an MDS code.}
  \label{fig:illustration-storage}
\end{figure}

\subsection{Intuition}

The idea behind the constructed PIR scheme is to use the symmetric property of matrices $\bfM^f$ as a way to reuse information in order to decrease the download complexity of the scheme. We note that the servers are assumed not to collude. In this scheme, each file is divided into $S = n-k$ stripes. The user generates a set of $k$ queries to the servers, similarly to the scheme in \cite{tajeddine2016private}. A query is defined as an $n \times S\times F$ vector that is sent by the user to retrieve information. Randomness is embedded in the queries as a way to hide the requested file's identity, in a similar manner to one-time padding. Naturally, if privacy were not a concern, a query to retrieve file $\bfX^{f_0}$ would be the vector of size $n \times S\times F$ with zeroes everywhere, except in positions $f = f_0$ corresponding to the requested file. 

%The queries are then sent to the servers such that, in every query, exactly one piece from each stripe is requested, and every stripe is requested from a server exactly one time in all queries \BLUE{[Julien: sorry I did not get the last sentence]}\GREEN{[Razan: I am not sure how to explain it better, basically saying that we want the queries to give us $k$ pieces about each stripe and we do not want redundant information. If anyone can help writing this sentence better :).]}. 
The queries are then sent to the servers which project queries on their stored data the following manner. 
%After receiving the queries, the servers project all $k$ queries on their stored data in the last $d-k$ columns, since each of these columns store $kS$ file stripes encoded using an $[n,k]$ maximum distance separable (MDS) code.
For the last $d-k$ columns, since each of these columns stores file stripes encoded using an $[n, k]$ MDS code, servers are asked to project \emph{all} the queries on the data they hold, similarly to~\cite{tajeddine2016private}.
For each of the other columns, stripes contain information already retrieved from the previously used columns, due to the nature of the product-matrix construction. Thus, from server $S_d$ down to server $S_1$, servers are asked to project  on their stored data a \emph{decreasing} subset of the initial set of queries. This still enables the user to reconstruct the requested file, due to the fact that she had peeled off some randomness and information symbols from previous columns. Moreover, it allows her to run a more efficient PIR scheme on an $[n, k']$ MDS code with where $k'<k$. More details are given in the upcoming sections.

\subsection{The PIR scheme}
\label{subsec:PIR-MBR}

In this section, we describe the PIR scheme explicitly. Let us assume that the user wants to retrieve a file $\bfX^{f_0}$, for some $1 \le f_0 \le F$. We fix the number of stripes to $S = n-k$, and we consider a $k$-tuple of queries $\bfQ = (\bfQ_1, \dots, \bfQ_k)$, such that for $1 \le \ell \le k$, query $\bfQ_\ell$ has the following form:
\[
\bfQ_\ell =  \left( Q^f_\ell[i,s],
  {
    \scriptsize
    \begin{array}{l}
      1 \le i \le n \\
      1 \le s \le S\\
      1 \le f \le F
    \end{array}
  }
\right).
\]
Notice that, since the same set of queries is meant to be used for every column, query $\bfQ_\ell$ does \emph{not} depend on a column index $j \in [1,d]$. This property is fundamental for the privacy of the scheme.

The sub-query $\bfQ^f_\ell[i,\bcdot]$ is then sent to server $S_i$, for each $1 \le i \le n$. The response $R_\ell[i, j] \in \FF_q$ of server $S_i$ with respect to the pair $(\ell,j)$, is then defined as:
\[ 
 R_\ell[i, j] \mydef \langle \bfQ_\ell[i, \bcdot], \bfC[i, j, \bcdot] \rangle = \sum_{s,f} Q^f_\ell[i, s] C^f[i, j, s]\,.
 \]
We also denote by $\bfR_\ell[\bcdot, j] \mydef (R_\ell[1,j], \dots, R_\ell[n,j]) \in \FF_q^n$.

  {\bf Generation of $\bfQ$.} The random tuple of queries $\bfQ$ is defined as the sum of two components.
\begin{enumerate}
  \item A random part $\bfD$, defined as follows. For every $\ell, s, f$, a symbol $\lambda_{\ell, s, f} \in \FF_q$ is picked uniformly at random and independently of others. Then, for every $1 \le i \le n$, we define:
    \[
    D^f_\ell[i, s] = \lambda_{\ell, s, f}.
    \]
    In other words, $\bfD^f_\ell[\bcdot, s] \in \FF_q^n$ is a word picked uniformly at random in the repetition code of length~$n$.
  \item A deterministic part $\bfE^{(f_0)}$, also called the retrieval pattern. This pattern is defined by:
    \[
    E_\ell^{(f_0),f}[i, s] = 
    \left\{
      \begin{array}{ll}
        1 &\text{if } f=f_0 \text{ and } n-i = \ell + s - 2\quad (\text{mod } S),\\
        0 &\text{otherwise.}
      \end{array}
    \right.
    \]
\end{enumerate}
Finally, the tuple of queries $\bfQ$ is defined by $\bfQ \mydef \bfD + \bfE^{(f_0)}$. Notice here that each query is sent to the servers by hiding the deterministic part with a random vector. Therefore, the privacy of the scheme still holds.%, \emph{i.e.}, $$I(f_0|\bfQ^f_\ell[i,j,\bcdot) =0.$$ 
\medskip

  {\bf Server responses to queries.} We now assume that $\bfQ[i, \bcdot]$ is sent to server $S_i$, for every $1 \le i \le n$. In the proposed protocol, the set of responses required by the user depend on the index $j \in [1,d]$ of the column, as described below:
  \begin{itemize}
    \item For columns $k+1 \le j \le d$,  every server $S_i$, $1 \le i \le n$, must send back to the user the responses $R_\ell[i,j]$, where $1 \le \ell \le k$ .
    \item For columns $1 \le j \le k$, only servers $S_i$ such that $k-j+1 \le i \le n$ are required to respond to the user. Those servers $S_i$ must compute and send the subset of responses $R_\ell[i,j]$, such that $1 \le \ell \le j$. 
  \end{itemize}
We here emphasize that, for these first columns $1 \le j \le k$, the \emph{subset} of servers $S_i$, $i \in [k-j+1, n]$, send the \emph{subset} of responses $R_\ell[i,j]$, $\ell \in [1, j]$ to the user. This is a key point in order to achieve a good PIR rate --- see Example~\ref{ex:PIR-MBR} for an illustration.
\medskip

  {\bf Reconstruction of $\bfX^{f_0}$.} The recovery is run columnwise, from column $d$ down to column $1$. For each step $j$, $1 \le j \le d$, the goal is to retrieve $\bfM^{f_0}[\bcdot, j, \bcdot]$ along with some random vectors.
  \begin{itemize}
    \item \emph{For $k+1 \le j \le d$.} A precise description of the recovery algorithm is given in the proof of Lemma~\ref{lem:correctness-last-columns}. In short, it consists of running, \emph{independently on each column $\bfC[\bcdot, j, \bcdot]$}, the reconstruction of the PIR scheme over an MDS code described in \cite{tajeddine2016private}. Indeed, each $\bfC[\bcdot, j, \bcdot]$ can be viewed as a smaller database encoded and stored in an $[n, k]$ MDS storage system. This procedure allows the user to recover $\bfM^{f_0}[\bcdot, j, \bcdot]$, but one should notice that she can also collect random vectors $\sum_{s,f} M^{f}[r,j,s] \bfD^f_\ell[\bcdot, s] \in \FF_q^n$, for all $1 \le r, \ell \le k$.
    \item \emph{For $1 \le j \le k$.} At step $j$, we can assume that for every $j' \ge j+1$, the user has already collected 
\begin{itemize}
  \item $\bfM^{f_0}[\bcdot, j', \bcdot]$ and
  \item the random vectors $\sum_{s,f} M^f[r,j',s] \bfD^f_\ell[\bcdot, s] \in \FF_q^{n-k+\min\{k,j'\}}$ for every $1 \le r,\ell \le \min\{k, j'\}$. 
\end{itemize}
Recall that $M^f[r,j',s] = M^f[j',r,s]$ and that every $\bfD^f_\ell[\bcdot, s]$ lies in a repetition code. As a consequence, the user knows $\sum_{s,f} M^f[r,j,s] \bfD^f_\ell[\bcdot, s] \in \FF_q^{n-k+\min\{k,j'\}}$ for every $j+1 \le r \le d$ and every $1 \le \ell \le j$. The retrieval process described in the proof of Lemma~\ref{lem:correctness-first-columns} then ensures that the user can retrieve $\bfM^{f_0}[\bcdot,j,\bcdot]$ and the random vectors $\sum_{s,f} M^f[r,j,s] \bfD^f_\ell[\bcdot, s] \in \FF_q^{n-k+j}$ for  every $1 \le r,\ell \le j$. %Informally, the idea the following. The randomness the user has collected from previous rounds/steps allows her to project the servers' responses to a smaller space, by peeling off the known information, which allows her to run a more efficient PIR scheme. \BLUE{[find a better sentence to explain this in a few words... it will also be useful in the abstract/introduction]}
  \end{itemize}

We start by giving a simple example before diving into technical proofs. 

\begin{example}
  \label{ex:PIR-MBR}
  We use the $(6, 3, 4)$ PM-MBR regenerating code described in Example~\ref{ex:MBR}. For this purpose, the files are divided into $S = n-k = 3$ stripes, and the user sends $k=3$ query vectors:
  \[
  \centering
  \begin{array}{|c|c|c|c|}
    \hline
                       & \text{Query } 1   & \text{Query } 2   & \text{Query } 3    \\\hline
    \text{Server } S_1 & \bfu              & \bfv              & \bfw               \\\hline
    \text{Server } S_2 & \bfu              & \bfv              & \bfw               \\\hline
    \text{Server } S_3 & \bfu              & \bfv              & \bfw               \\\hline
    \text{Server } S_4 & \bfu + \bfe_{f_0,1} & \bfv + \bfe_{f_0,2} & \bfw + \bfe_{f_0,3} \\\hline
    \text{Server } S_5 & \bfu + \bfe_{f_0,2} & \bfv + \bfe_{f_0,3} & \bfw + \bfe_{f_0,1} \\\hline
    \text{Server } S_6 & \bfu + \bfe_{f_0,3} & \bfv + \bfe_{f_0,1} & \bfw + \bfe_{f_0,2} \\\hline
  \end{array}
  \]  
  where $\bfe_{f_0,s_0} \in \FF_q^{F \times S}$ is the deterministic vector with all zeros, but one $1$ in position $(f_0,s_0)$, which corresponds to stripe $s_0$ of what is stored from file $\bfX^{f_0}$. Vectors $\bfu, \bfv, \bfw \in \FF_q^{F \times S}$ are uniformly random vectors.
  
  The servers project the data stored in columns $3$ and $4$ on all the queries. Server $S_1$ does not respond to any other queries. Servers $S_2,\dots, S_6$ project only the first $2$ queries on the data stored in their second column. Server $S_2$ does not respond to any other queries. Servers $S_3,\dots, S_6$ project only the first query on the data stored in column $1$. Then the servers send this information back to the user.
%  \GREEN{Ragnar: I think a figure here would help readability and intuition a whole lot. But that can wait until the revision round.} \BLUE{Razan: I will build a figure for this, but I think it would take time, so yes, it would be better to wait on this till the revision.}
  
  $\bullet$ Decodability: In this example $d-k=1$. For the last row, the user receives the responses from all three queries from all six servers. The storage code for the last row is a $[6,3]$ MDS code. If we look at the responses to the first query from the last column, it will be:  
  \[
  \centering
  \small
  \begin{array}{|c|c|}
    \hline
        & \text{Response } 1    \\\hline
    S_1 & \sum_{f=1}^F\sum_{s=1}^3u_{f,s}(M^f[1,4,s]+M^f[2,4,s]+M^f[3,4,s])              \\\hline
    S_2 & \sum_{f=1}^F\sum_{s=1}^3u_{f,s}(M^f[1,4,s]+2M^f[2,4,s]+4M^f[2,4,s])            \\\hline
    S_3 & \sum_{f=1}^F\sum_{s=1}^3u_{f,s}(M^f[1,4,s]+3M^f[2,4,s]+2M^f[3,4,s])            \\\hline
    S_4 & \sum_{f=1}^F\sum_{s=1}^3u_{f,s}(M^f[1,4,s]+4M^f[2,4,s]+2M^f[3,4,s]) + \textcolor{blue}{M^1[1,4,1]+4M^1[2,4,1]+2M^1[3,4,1]} \\\hline
    S_5 & \sum_{f=1}^F\sum_{s=1}^3u_{f,s}(M^f[1,4,s]+5M^f[2,4,s]+4M^f[3,4,s]) + \textcolor{blue}{M^1[1,4,2]+5M^1[2,4,2]+4M^1[3,4,2]} \\\hline
    S_6 & \sum_{f=1}^F\sum_{s=1}^3u_{f,s}(M^f[1,4,s]+6M^f[2,4,s]+M^f[3,4,s]) + \textcolor{blue}{M^1[1,4,3]+6M^1[2,4,3]+1M^1[3,4,3]} \\\hline
  \end{array}
  \]

  From the above table, we can see that the user can recover the three random symbols $$\sum_{f=1}^F\sum_{s=1}^3u_{f,s}M^f[1,4,s],$$
$$\sum_{f=1}^F\sum_{s=1}^3u_{f,s}M^f[2,4,s]$$ and $$\sum_{f=1}^F\sum_{s=1}^3u_{f,s}M^f[3,4,s],$$ along with the three required symbols $$M^1[1,4,1], M^1[2,4,2], M^1[3,4,3].$$
Following the same reasoning, from the second and third queries the user can retrieve the random symbols $$\sum_{f=1}^F\sum_{s=1}^3v_{f,s}M^f[1,4,s],  \sum_{f=1}^F\sum_{s=1}^3w_{f,s}M^f[1,4,s],$$
$$\sum_{f=1}^F\sum_{s=1}^3v_{f,s}M^f[2,4,s], \sum_{f=1}^F\sum_{s=1}^3w_{f,s}M^f[2,4,s]$$ and $$\sum_{f=1}^F\sum_{s=1}^3v_{f,s}M^f[3,4,s],\sum_{f=1}^F\sum_{s=1}^3w_{f,s}M^f[3,4,s],$$ along with the required symbols, $$M^1[1,4,2], M^1[2,4,3], M^1[3,4,1], M^1[1,4,3], M^1[2,4,1], M^1[3,4,2].$$ Notice that the PIR scheme run over the fourth column achieves a PIR rate of $3/6$.

For the third column, the storage code is a $[6,4]$ MDS code. Recall that $\bfM^f[3, 4, \bcdot] = \bfM^f[4, 3, \bcdot]$ for every $f$, and the user has already collected information in the responses from column $4$. As a consequence, the user knows the vector $\bfM^1[4,3,\bcdot]$ as well as the random symbols 
$$\sum_{f=1}^F\sum_{s=1}^3u_{f,s}M^f[4,3,s],$$
$$\sum_{f=1}^F\sum_{s=1}^3v_{f,s}M^f[4,3,s]$$
and
$$\sum_{f=1}^F\sum_{s=1}^3w_{f,s}M^f[4,3,s].$$
Therefore, the responses from the third column allow the user to decode the symbols, just like the responses from the last column. The user, thus,  recovers $\bfM^1[1,3,\bcdot]$, $\bfM^1[2,3,\bcdot]$, and $\bfM^1[3,3,\bcdot]$ with a rate $3/6$.

For the second column, the storage code is also a $[6,4]$ MDS code, but the user can use the information she collected from columns $3$ and $4$. More precisely, the user already knows vectors $\bfM^1[2,3,\bcdot]$, $M^1[2,4,\bcdot]$, and random symbols 
$$\sum_{f=1}^F\sum_{s=1}^3u_{f,s}M^f[2,3,s], \sum_{f=1}^F\sum_{s=1}^3u_{f,s}M^f[2,4,s],$$
$$\sum_{f=1}^F\sum_{s=1}^3v_{f,s}M^f[2,3,s], \sum_{f=1}^F\sum_{s=1}^3u_{f,s}M^f[2,4,s]$$
and
$$\sum_{f=1}^F\sum_{s=1}^3w_{f,s}M^f[2,3,s], \sum_{f=1}^F\sum_{s=1}^3w_{f,s}M^f[2,4,s].$$
Thus, the user does not need the response from server $S_1$ in order to decode the symbols. It means that the code can be assumed to be reduced to a $[5, 2]$ MDS code. The user can then decode the parts $\bfM^1[1, 2, \bcdot]$, and $\bfM^1[2,2,\bcdot]$ from servers $S_2, \dots, S_6$ and from the first $2$ queries, with rate $6/10=3/5$.

Following the same reasoning for the first column, the user needs only the responses of servers $S_3,\dots, S_6$ to the first query only. The storage code can be seen as a $[4,1]$ MDS code on those servers, after introducing the already known information. This allows the user to decode the last part of the file, $\bfM^1[1,1, \bcdot]$, with rate $3/4$.

Finally, the PIR rate of the scheme in this example is $R_{\rm MBR} = \frac{3+6+9+9}{4+10+18+18} = \frac{27}{50} = 0.54$. We see this rate is larger than $1 - \frac{k}{n} = 1 - \frac{3}{6} = \frac{1}{2} = 0.5$ which is the capacity of scalar MDS-coded PIR schemes, but %(hopefully) \RED{cami: What do you mean by hopefully?! Write more professionally.} 
less than $1 - \frac{B}{nd} = 1 - \frac{9}{6 \times 4} = \frac{5}{8} = 0.625$, which is an upper bound on the capacity of $[nd, B]$-coded PIR schemes.
  %[\textcolor{red}{Not sure how to write this.}] The user receives $\mathbf{u}\cdot Y_1$, $\mathbf{u}$
  
  $\bullet$ Privacy: Privacy follows from the fact that for any fixed desired file, every server gets a uniform random vector as a query. 
  
\end{example}

\subsection{Analysis}

We next prove the correctness of the PIR scheme proposed in previous section.

\begin{lemma}
  \label{lem:correctness-last-columns}
  Let $k+1 \le j \le d$. Then, conditioned on $(\bfR_1[\bcdot, j], \dots, \bfR_k[\bcdot, j])$, the following is determined:
  \begin{itemize}
  \item the piece $\bfM^{f_0}[\bcdot, j, \bcdot]$ of the desired file;
  \item the random vectors $\sum_{s,f} M^{f}[r,j,s] \bfD^f_\ell[\bcdot, s] \in \FF_q^n$ for every $1 \le r, \ell \le k$.
  \end{itemize}
\end{lemma}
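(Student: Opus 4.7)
My plan is to expand each response $R_\ell[i,j]$ via the decomposition $\bfQ = \bfD + \bfE^{(f_0)}$ and then reduce the recovery to the MDS-coded PIR scheme of~\cite{tajeddine2016private}, applied columnwise.

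First, by linearity I would split each response as $R_\ell[i,j] = Y_\ell[i,j] + Z_\ell[i,j]$, where
\[
Y_\ell[i,j] := \sum_{s,f} D^f_\ell[i,s]\,C^f[i,j,s], \qquad Z_\ell[i,j] := \sum_{s,f} E_\ell^{(f_0),f}[i,s]\,C^f[i,j,s]
\]
are the interference and retrieval contributions. Since $j \ge k+1$, the zero block in $\bfM^f$ forces $M^f[r,j,s] = 0$ for all $r \ge k+1$, so each column codeword satisfies $\bfC^f[\bcdot, j, s] = \sum_{r=1}^{k} M^f[r,j,s]\,\bfgamma_r \in \RS_k(\bfx)$.

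Second, I would observe that the interference vector $(Y_\ell[i,j])_{i=1}^{n}$ lies in $\RS_k(\bfx)$, and that its coefficients in the basis $\Gamma$ are exactly the random scalars one wishes to recover. Indeed, $D^f_\ell[i,s] = \lambda_{\ell,s,f}$ is independent of $i$, so
\[
(Y_\ell[i,j])_{i=1}^{n} \;=\; \sum_{r=1}^{k} a_{r,\ell,j}\, \bfgamma_r, \qquad a_{r,\ell,j} \;:=\; \sum_{s,f} \lambda_{\ell,s,f}\, M^f[r,j,s],
\]
and the target random vector $\sum_{s,f} M^f[r,j,s]\,\bfD^f_\ell[\bcdot, s]$ equals the repetition codeword $a_{r,\ell,j}\cdot \bfone \in \FF_q^n$. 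Hence recovering the scalars $a_{r,\ell,j}$ recovers the random vectors in the statement.

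Third, viewing the $j$-th column as a database of $F$ files, each split into $S = n-k$ stripes and encoded with the $[n,k]$ MDS code $\RS_k(\bfx)$, the collection of responses $\{R_\ell[i,j]\}_{i,\ell}$ is exactly the output of the scheme of~\cite{tajeddine2016private} applied to that database: the interference lives in the $k$-dimensional code $\RS_k(\bfx)$, and the cyclic deterministic pattern $\bfE^{(f_0)}$ coincides with their retrieval pattern. Their decoder jointly recovers the $k^2$ scalars $a_{r,\ell,j}$ together with the $kS$ symbols $\{M^{f_0}[r,j,s] : r \in [1,k],\, s \in [1,S]\}$, i.e.\ the entirety of $\bfM^{f_0}[\bcdot, j, \bcdot]$ (the entries with $r \ge k+1$ being zero and thus known a priori). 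The only nontrivial step is to verify that the linear system solved by this decoder has a unique solution; this rests on the MDS property of $\RS_k(\bfx)$, which ensures that the interference codeword is determined by any $k$ of its coordinates, combined with the fact that the cyclic pattern separates, for each fixed stripe $s$, the servers whose response contains a target symbol from that stripe from those whose response at query $\ell$ is pure interference---exactly the invertibility property exploited in~\cite{tajeddine2016private}.
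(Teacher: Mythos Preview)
Your proposal is correct and follows essentially the same approach as the paper: decompose each response into an interference part (your $Y_\ell$, the paper's $\bfA_\ell$) lying in $\RS_k(\bfx)$ and a retrieval part (your $Z_\ell$, the paper's $\bfB_\ell$), then use the MDS property to separate them and read off both the random scalars and the file symbols. The only difference is one of presentation: where you defer the final decoding step to the scheme of~\cite{tajeddine2016private}, the paper spells it out explicitly by observing that $\bfB_\ell[\bcdot,j]$ is supported on $[k+1,n]$ (so the first $k$ coordinates of $\bfR_\ell$ are pure interference, determining $\bfA_\ell$ by MDS, hence $\bfB_\ell$), and then that for each stripe $s$ the vectors $\bfB_1,\dots,\bfB_k$ together yield $k$ distinct coordinates of $\bfC^{f_0}[\bcdot,j,s]$.
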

\begin{proof}
  Let us fix $1 \le \ell \le k$. After receiving responses from servers, the user is able to build the response vector
  \[
  \bfR_\ell[\bcdot, j] \mydef (R_\ell[1,j], \dots, R_\ell[n,j]) \in \FF_q^n\,.
  \]
  Notice that we have 
  \[
    \bfR_\ell[\bcdot, j] = \sum_{s, f}  \bfD^f_\ell[\bcdot, s] \star \bfC^f[\bcdot, j, s] + \sum_s \bfE_\ell^{(f_0),f_0}[\bcdot, s] \star \bfC^{f_0}[\bcdot, j, s]\,.
  \]
  We can now define 
  \[
  \bfB_\ell[\bcdot, j] \mydef\sum_s \bfE_\ell^{(f_0),f_0}[\bcdot, s] \star \bfC^{f_0}[\bcdot, j, s] \in \FF_q^n\,,
  \]
  and we see that
  \begin{equation}
    \label{eq:correctness-last-columns}
    B_\ell[i, j] = 
    \left\{
      \begin{array}{ll}
        C^{f_0}[i,j,s'] & \text{if } i \ge k+1, \\
        0 & \text{otherwise,}
      \end{array}
    \right.
  \end{equation}
  where $s' \in [1, k]$ satisfies $n - i = (\ell + s' - 2 \!\mod n-k)$. In particular $\bfB_\ell[\bcdot, j]$ is supported on $[k+1, n]$, and therefore has weight at most $n-k$.
  
  Now, denote by 
  \[
  \bfA_\ell[\bcdot, j] \mydef \sum_{s, f} \bfD^f_\ell[\bcdot, s] \star \bfC^{f}[\bcdot, j, s] \in \FF_q^n\,.
  \]
  Since every $\bfD^f_\ell[\bcdot, s]$ belongs to the repetition code and $\bfC^f[\bcdot, j, s] \in \RS_k(\bfx)$, it holds that $\bfA_\ell[\bcdot, j]  \in \RS_k(\bfx)$. We have also seen that $R_\ell[i,j] = A_\ell[i,j]$ for $1 \le i \le k$, thus the user knows the $k$  first symbols of $\bfA_\ell[\bcdot, j]$. Since $[1,k]$ is an information set for $\RS_k(\bfx)$, she can recover $\bfA_\ell[\bcdot, j]$ entirely. The recovery of $\bfB_\ell[\bcdot, j]$ follows easily.
  
  Let us now recall that $\bfC^{f}[\bcdot, j, s] \in \RS_k(\bfx)$ can be written as $\sum_{r=1}^k M^{f}[r, j, s] \bfgamma_r$. Moreover, $\bfD^f_\ell[\bcdot, s]$ lies in a repetition code, hence $D^f_\ell[i, s] = \lambda_{\ell, s, f}$ for some $\lambda_{\ell, s, f} \in \FF_q$. Therefore, expressing
  \[
  \bfA_\ell[\bcdot, j] = \sum_{r=1}^d \left( \sum_{s, f} \lambda_{\ell, s, f} M^{f}[r, j, s] \right) \bfgamma_r
  \]
  in the basis $\{ \bfgamma_1, \dots, \bfgamma_d \} \subset \FF_q^n$ of nested Reed-Solomon codes $\RS_d(\bfx) \supseteq \RS_k(\bfx)$ allows us to retrieve every scalar $\sum_{s, f} \lambda_{\ell, s, f} M^{f}[r, j, s]$, or equivalently, every $\sum_{s,f} M^{f}[r,j,s] \bfD^f_\ell[\bcdot, s] \in \FF_q^n$.
  
  Finally, Equation~\eqref{eq:correctness-last-columns} shows that for every $1 \le s \le n-k$, the knowledge of $\bfB_1[\bcdot, j], \dots, \bfB_k[\bcdot, j]$ allows the user to retrieve a subset of $k$ distinct symbols of $\bfC^{f_0}[\bcdot, j, s]$, which is equivalent to retrieving $\bfM^{f_0}[\bcdot, j, s]$. Thus, she can finally obtain $\bfM^{f_0}[\bcdot, j, \bcdot]$.
\end{proof}

\begin{lemma}
  \label{lem:correctness-first-columns}
  Let $1 \le j \le k$. For every $1 \le \ell \le j$, for convenience we denote by $$\bfR_\ell[\bcdot, j] \mydef (R_\ell[k-j+1, j], \dots, R_\ell[n,j]) \in \FF_q^{n-k+j}.$$
  Then, conditioned on $(\bfR_1[\bcdot, j], \dots, \bfR_j[\bcdot, j])$ and on
  \begin{equation}
    \label{eq:random-vectors-first-columns}
  \sum_{s,f} M^f[r,j,s] \bfD^f_\ell[\bcdot,s], \quad \text{for all }\, j+1 \le r \le d, \quad 1 \le \ell \le j,
  \end{equation}
  % \begin{itemize}
  % \item $(\bfR_1[\bcdot, j], \dots, \bfR_j[\bcdot, j])$ and
  % \item $\sum_{s,f} M^f[r,j,s] \bfD^f_\ell[\bcdot,s]$ for every $j+1 \le r \le d$ and every $1 \le \ell \le j$,
  % \end{itemize}
  the following are determined:
  \begin{itemize}
  \item the piece $\bfM^{f_0}[\bcdot, j, \bcdot]$ of the desired file;
  \item random vectors $\sum_{s,f} M^f[r,j,s] \bfD^f_\ell[\bcdot, s] \in \FF_q^{n-k+j}$ for all $1 \le r, \ell \le j$.
  \end{itemize}
\end{lemma}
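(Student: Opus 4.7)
The plan is to mirror the proof of Lemma~\ref{lem:correctness-last-columns}, but to insert a ``peeling'' step that uses the side information~\eqref{eq:random-vectors-first-columns} to remove the contribution of rows $r \ge j+1$ from each response. First I would decompose $R_\ell[i,j] = A_\ell[i,j] + B_\ell[i,j]$, with
\[
\bfA_\ell[\bcdot, j] \mydef \sum_{s,f} \bfD^f_\ell[\bcdot, s] \star \bfC^f[\bcdot, j, s], \qquad \bfB_\ell[\bcdot, j] \mydef \sum_s \bfE^{(f_0), f_0}_\ell[\bcdot, s] \star \bfC^{f_0}[\bcdot, j, s].
\]
Since $\bfD^f_\ell[\bcdot, s]$ is the constant vector $\lambda_{\ell, s, f} \bfone$ and $\bfC^f[\bcdot, j, s] = \sum_{r=1}^d M^f[r,j,s] \bfgamma_r \in \RS_d(\bfx)$, this gives $\bfA_\ell[\bcdot, j] = \sum_{r=1}^d \mu_{\ell, r, j} \bfgamma_r$ with $\mu_{\ell, r, j} \mydef \sum_{s, f} \lambda_{\ell, s, f} M^f[r, j, s]$. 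Crucially, the hypothesis~\eqref{eq:random-vectors-first-columns} amounts exactly to knowing the scalars $\mu_{\ell, r, j}$ for $r \in [j+1, d]$ and $\ell \in [1, j]$, since each such random vector equals $\mu_{\ell, r, j} \bfone$.

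I would then form the corrected response
\[
\tilde{R}_\ell[i, j] \mydef R_\ell[i, j] - \sum_{r = j+1}^d \mu_{\ell, r, j} \gamma_r[i] = \sum_{r = 1}^j \mu_{\ell, r, j} \gamma_r[i] + B_\ell[i, j]
\]
for every $i \in I \mydef [k-j+1, n]$. Because $\bfE^{(f_0), f_0}$ does not depend on $j$, the support argument of Lemma~\ref{lem:correctness-last-columns} still applies and $\bfB_\ell[\bcdot, j]$ is supported on $[k+1, n]$; hence it vanishes on the first $j$ coordinates of $I$, namely $[k-j+1, k]$. On those coordinates $\tilde{R}_\ell[\bcdot, j]$ therefore agrees with a codeword of $\RS_j(\bfx_{|I})$, and by the MDS property any $j$ coordinates of $\RS_j(\bfx_{|I})$ form an information set. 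Solving the resulting $j \times j$ linear system recovers $\mu_{\ell, 1, j}, \dots, \mu_{\ell, j, j}$, which (multiplied by $\bfone$) yields the random vectors $\sum_{s,f} M^f[r,j,s] \bfD^f_\ell[\bcdot, s] \in \FF_q^{n-k+j}$ promised in the lemma.

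With every $\mu_{\ell, r, j}$ now known, one more subtraction produces $\bfB_\ell[\bcdot, j]_{|I}$, which provides $C^{f_0}[i, j, s'(\ell, i)]$ for each $\ell \in [1, j]$ and each $i \in [k+1, n]$, where $s'(\ell, i) \in [1, n-k]$ is the unique stripe prescribed by the cyclic retrieval pattern. Fixing a stripe $s$ and letting $\ell$ range over $[1, j]$ should produce $j$ distinct positions $i \in [k+1, n]$ at which $C^{f_0}[i, j, s]$ is known. I would then expand
\[
C^{f_0}[i, j, s] = \sum_{r = 1}^j M^{f_0}[r, j, s] \gamma_r[i] + \sum_{r = j+1}^d M^{f_0}[r, j, s] \gamma_r[i],
\]
and use the symmetry $M^{f_0}[r, j, s] = M^{f_0}[j, r, s]$ together with the inductive knowledge of $\bfM^{f_0}[\bcdot, r, \bcdot]$ for $r > j$ (available because columns are decoded from $d$ down to $1$) to eliminate the trailing sum. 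A final $j \times j$ MDS inversion over the $j$ distinct positions then yields $M^{f_0}[1, j, s], \dots, M^{f_0}[j, j, s]$ for every $s$, completing the recovery of $\bfM^{f_0}[\bcdot, j, \bcdot]$.

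The main technical point I expect, beyond routine bookkeeping, is the verification that for each fixed stripe $s$ the $j$ positions $\{i \in [k+1, n] : s'(\ell, i) = s,\ \ell \in [1, j]\}$ are indeed distinct, so that the final MDS decoding is well-posed. This should follow from the bijective nature of the cyclic shift underlying $\bfE^{(f_0), f_0}$, which assigns a different $i \in [k+1, n]$ to each value of $\ell$ when $s$ is fixed.
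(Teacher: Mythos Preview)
Your proposal is correct and follows essentially the same route as the paper: the same $\bfA_\ell + \bfB_\ell$ decomposition, the same peeling of rows $r \ge j+1$ using the side information~\eqref{eq:random-vectors-first-columns}, and the same use of $[k-j+1,k]$ as an information set for $\RS_j(\bfx_{|I})$ to separate $\bfA''_\ell$ from $\bfB_\ell$. Your final step is actually spelled out more carefully than the paper's (which just says ``similarly to Lemma~\ref{lem:correctness-last-columns}''): you explicitly invoke the symmetry $M^{f_0}[r,j,s]=M^{f_0}[j,r,s]$ and the already-decoded columns $j' > j$ to reduce to a $j\times j$ MDS system, which is indeed what is needed since here $\bfC^{f_0}[\bcdot,j,s]\in\RS_d(\bfx)$ rather than $\RS_k(\bfx)$.
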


\begin{proof}
  Let us fix $1 \le \ell \le j$. In contrast with Lemma~\ref{lem:correctness-last-columns}, we will deal with vectors of shorter length $n-k+j$. In particular, we denote $\bfx' = (x_{k-j+1}, \dots, x_n)$. Similarly, the user is able to build the response vector $\bfR_\ell[\bcdot, j]$ of length $n-k+j$ given by
  \[
    \bfR_\ell[\bcdot, j] \mydef (R_\ell[k-j+1,j], \dots, R_\ell[n,j]) = \bfA_\ell[\bcdot, j] + \bfB_\ell[\bcdot, j]\,,
  \]
  where $\bfA_\ell[\bcdot, j]$ and $\bfB_\ell[\bcdot, j]$ are defined as in Lemma~\ref{lem:correctness-last-columns}. One can rewrite $\bfA_\ell[\bcdot, j] \in \FF_q^{n-k+j}$ as follows:
  \[
  \begin{aligned}
    \bfA_\ell[\bcdot, j]
    &= \sum_{s,f} \bfD^f_\ell[\bcdot, s] \star \bfC^f[\bcdot, j, s]\\
    &= \sum_{s,f}  \bfD^f_\ell[\bcdot, s] \star \left(\sum_{r=1}^d M^f[r, j, s] \bfgamma_r \right)\\
    &= \sum_{r=1}^j \sum_{s,f} M^f[r, j, s] \; \bfD^f_\ell[\bcdot, s] \star \bfgamma_r + \sum_{r=j+1}^d \sum_{s,f} M^f[r, j, s] \; \bfD^f_\ell[\bcdot, s] \star \bfgamma_r.
  \end{aligned}
  \]
  Therefore, using vectors in \eqref{eq:random-vectors-first-columns}
  %the second bullet in the statement of the lemma\GREEN{Ragnar: refer back with label-ref to make clear what we are conditioning on. Also, I would prefer more precise mathematical terminology, "conditioned on" rather than "with the knowledge of"}, 
  the user can build
  \[
  \bfA'_\ell[\bcdot, j] \mydef \sum_{r=j+1}^d \left( \sum_{s,f}  M^f[r, j, s] \; \bfD^f_\ell[\bcdot, s] \right) \star \bfgamma_r.
  \]
  Hence, she is able to construct
  \[
    \bfR''_\ell[\bcdot, j] \mydef \bfR_\ell[\bcdot, j] - \bfA'_\ell[\bcdot, j] = (\bfA_\ell[\bcdot, j] - \bfA'_\ell[\bcdot, j]) + \bfB_\ell[\bcdot, j]\,.
  \]
  As the basis  $\{\bfgamma_1, \dots, \bfgamma_d\}$ is ordered by degree, we see that $\bfA''_\ell[\bcdot, j] \mydef \bfA_\ell[\bcdot, j] - \bfA'_\ell[\bcdot, j]$ lies in $\RS_j(\bfx')$. Indeed, each $\{\bfgamma_1, \dots \bfgamma_j\}$ must also be a basis of smaller RS codes. Also remark that once again, the vector $\bfB_\ell[\bcdot, j] \in \FF_q^{n-k+j}$ is supported by $[k+1, n]$. Since $[k-j+1, k]$ is an information set for $\RS_j(\bfx')$, the user can thus recover $\bfA''_\ell[\bcdot, j]$ and $\bfB_\ell[\bcdot, j]$ from $\bfR''_\ell[\bcdot, j]$.
  
  Similarly to Lemma~\ref{lem:correctness-last-columns}, one can easily see that $\bfM^{f_0}[\bcdot, j, \bcdot]$ can be obtained from $\bfB_1[\bcdot, j], \dots, \bfB_j[\bcdot, j]$.

  Finally, $\bfA'_\ell[\bcdot, j]$ and $\bfA''_\ell[\bcdot, j]$ allow to reconstruct $\bfA_\ell[\bcdot, j]$. Similarly to the proof of Lemma~\ref{lem:correctness-last-columns}, the basis $\{\bfgamma_1, \dots, \bfgamma_j \}$ of $\RS_j(\bfx')$ leads to the recovery of random elements $\sum_{s,f} M^f[r,j,s] \lambda_{\ell, s, f} \in \FF_q$ for every $1 \le r, \ell \le j$.
\end{proof}

\begin{theorem}
  \label{thm:PIR-MBR}
  The scheme proposed in Section~\ref{subsec:PIR-MBR} is secure against non-colluding servers. Its PIR rate is:
  \[
  R_{\rm MBR} = \frac{3(n-k)(2d-k+1)}{6dn-3nk+3n-k^2+1}\,.
  \]
\end{theorem}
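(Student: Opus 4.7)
My first step is to observe that for each server index $i$ and each query index $\ell$, the entries of $\bfQ_\ell[i,\bcdot]$ take the form $Q^f_\ell[i,s] = \lambda_{\ell,s,f} + E_\ell^{(f_0), f}[i,s]$. Since the $\lambda_{\ell,s,f}$ are drawn i.i.d.\ uniformly from $\FF_q$, they act as a one-time pad masking the deterministic retrieval pattern $\bfE^{(f_0)}$. Hence the distribution of $\bfQ[i]$ is uniform on $\FF_q^{kSF}$ and independent of $f_0$, which gives $H(f_0\mid \bfQ[i]) = H(f_0)$ for every $i$ and settles the privacy half of the statement.

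\textbf{Correctness.} Next, I would process columns in the order $j = d, d-1, \dots, 1$, alternating between the two lemmas. Columns $j \in \{k+1, \dots, d\}$ are handled by Lemma~\ref{lem:correctness-last-columns}, which yields the slice $\bfM^{f_0}[\bcdot,j,\bcdot]$ together with the random vectors $\sum_{s,f} M^f[r,j,s]\bfD^f_\ell[\bcdot, s]$ for $1\le r,\ell\le k$. Columns $j\in \{1, \dots, k\}$ are then handled by Lemma~\ref{lem:correctness-first-columns}, whose hypothesis requires that the user already knows $\sum_{s,f} M^f[r,j,s]\bfD^f_\ell[\bcdot, s]$ for $j+1\le r\le d$ and $1\le\ell\le j$. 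The point to verify is that, using the symmetry $M^f[r,j,s] = M^f[j,r,s]$ of each stripe, these auxiliary vectors coincide with quantities already produced at some column $r > j$: if $r\ge k+1$ they appear in the output of Lemma~\ref{lem:correctness-last-columns} (since $j\le k$ and $\ell\le j\le k$), and if $j < r\le k$ they appear in the output of Lemma~\ref{lem:correctness-first-columns} applied at column $r$ (since $j < r$ and $\ell \le j < r$). Because each such vector lies in the repetition code, it is determined by a single $\FF_q$-scalar, and can therefore be restricted to the shorter index set $[k-j+1,n]$ used at step $j$. Iterating downwards reconstructs every slice $\bfM^{f_0}[\bcdot,j,\bcdot]$, hence the whole file $\bfX^{f_0}$.

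\textbf{Rate, and the main obstacle.} Finally, to obtain the rate formula I would count the file size and the total download. The numerator equals $BS = \bigl(k(d-k)+\tfrac{k(k+1)}{2}\bigr)(n-k) = \tfrac{k(n-k)(2d-k+1)}{2}$. For the denominator, each column $j \in \{k+1, \dots, d\}$ contributes $nk$ downloaded symbols, and each column $j \in \{1, \dots, k\}$ contributes $j(n-k+j)$, for a total of $(d-k)nk + \sum_{j=1}^{k} j(n-k+j)$. Applying $\sum_{j=1}^k j = \tfrac{k(k+1)}{2}$ and $\sum_{j=1}^k j^2 = \tfrac{k(k+1)(2k+1)}{6}$, then multiplying numerator and denominator by $6$ and cancelling the common factor $k$, yields exactly the stated $R_{\rm MBR}$. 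I expect the only real obstacle to be the bookkeeping in the correctness step: one must verify that the $(r,\ell)$ index ranges produced at the earlier columns really do cover what Lemma~\ref{lem:correctness-first-columns} demands at each subsequent column. The privacy argument (one-time padding) and the rate computation (routine counting and algebraic simplification) are essentially mechanical.
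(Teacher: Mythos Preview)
Your proposal is correct and follows essentially the same approach as the paper: privacy via the one-time-pad argument, correctness by iterating Lemmas~\ref{lem:correctness-last-columns} and~\ref{lem:correctness-first-columns} from column $d$ down to $1$ (using the symmetry $M^f[r,j,s]=M^f[j,r,s]$ to propagate the needed random vectors), and the rate by the same download count and algebraic simplification. If anything, your correctness paragraph is slightly more explicit than the paper's about checking that the $(r,\ell)$ ranges produced at earlier columns cover what Lemma~\ref{lem:correctness-first-columns} requires.
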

\begin{proof}
  Lemma~\ref{lem:correctness-last-columns} and Lemma~\ref{lem:correctness-first-columns} ensure that the user retrieves the correct file $\bfX^{f_0}$ as long as the servers $S_1, \dots, S_n$ follow the protocol described in Section~\ref{subsec:PIR-MBR}. Since the servers are assumed not to collude, the only way a server $S_i$ can learn information about the identity $f_0$ of the required file, is from its own query matrix $\bfQ[i,\cdot]$. Since the matrix $\bfQ[i,\cdot]$ is chosen such that it is statistically independent of $f_0$, the scheme is private. More precisely, since $\bfQ[i, \bcdot] = \bfD[i, \bcdot] + \bfE^{(f_0)}[i, \bcdot] \sim \bfD[i, \bcdot]$, we have 
% \[
% H(f_0 \mid \bfD[i, \bcdot] + \bfE^{(f_0)}[i, \bcdot]) = H(f_0 \mid \bfD[i, \bcdot]) = H(f_0)\,,
% \]
% \GREEN{Ragnar: The first equality is poorly motivated (unless we include the green text I suggested before the equation). Do we want to write
\[
H(f_0 \mid \bfQ[i, \bcdot]) = H(f_0 \mid \bfD[i, \bcdot]) = H(f_0),
\]
where $H(\cdot)$ denotes the entropy function. 

  Let us now compute the PIR rate. The file $\bfX^{f_0}$ consists of
  \[
  (n-k)B = (n-k)(k(d-k) + k(k+1)/2)
  \]
  symbols over $\FF_q$.  During step $j$, for $k+1 \le j \le d$, the user downloads $k$ responses from each server $S_1, \dots, S_n$. Hence she gets a total of $nk(d-k)$ symbols for all these steps. For columns $1 \le j \le k$, the user downloads $j$ responses from servers $S_{k-j+1}, \dots, S_n$, leading to a total of $\sum_{j=1}^k j(n-k+j)$ symbols for those steps. Therefore, we get the following PIR rate:
% \begin{eqnarray}
% R_{MBR} &=& \frac{(n-k)\left((d-k)k + k(k+1)/2\right)}{(d-k)nk + \sum_{i=0}^{k}(n-i)(k-i)}\nonumber \\
% &=& \frac{3(n-k)(2d-k+1)}{6dn-3nk+3n-k^2+1}\,.
% \end{eqnarray}
\begin{eqnarray}
  \label{eq:computation-PIR-rate}
R_{\rm MBR} &=& \frac{(n-k)\left((d-k)k + \frac{k(k+1)}{2}\right)}{(d-k)nk + \sum_{j=1}^{k}j(n-k+j)}\\
&=& \frac{(n-k)\left((d-k)k + \frac{k(k+1)}{2}\right)}{(d-k)nk + (n-k)\frac{k(k+1)}{2} + \frac{k(k+1)(2k+1)}{6}}\nonumber\\
&=& \frac{3(n-k)(2d-k+1)}{6dn-3nk+3n-k^2+1}\nonumber\,.
\end{eqnarray}
\end{proof}

\begin{remark}
  As a function of $n, k, B$, the PIR rate given in Theorem~\ref{thm:PIR-MBR} can be written as
  \begin{equation}
    \label{eq:computation-PIR-rate-2}
    R_{\rm MBR} = \frac{1 - \frac{k}{n}}{1 - \frac{k(k+1)(k-1)}{6 nB}}\,.
  \end{equation}
  Indeed, starting from Equation~\eqref{eq:computation-PIR-rate} we get
  \[
    R_{\rm MBR} = \frac{(n-k)B}{nB + \sum_{j=1}^k j(j-k)} = \frac{(n-k)B}{nB - \frac{k(k+1)(k-1)}{6}}\,,
  \]
  leading to the expected expression.
\end{remark}

\subsection{On the PIR rate}

\subsubsection{Comparison with the multi-file PIR scheme of Dorkson and Ng}
%\textcolor{red}{Cami: Not sure we want to keep this, if we do should be moved down to where we also compare to the RS rate and the colluding rate.}
%\BLUE{[Julien: to me this is a nice comparison, and, if everything is correct, our scheme has a much better rate (see Figure~\ref{fig:comparison})][Razan: I think it is good to keep this section, and I also think it is better as a separate section since the next one compares rates of PIR schemes that are not built on MBR codes, but this is comparison to previous work.]}

Dorkson and Ng in~\cite{DorksonN18} proposed a PIR scheme over PM-MBR codes in the context of \emph{multi-file} retrieval, \emph{i.e.}\ any set of $p \ge 1$ files $\bfX^{f_0}, \dots, \bfX^{f_{p-1}}$ can be simultaneously retrieved privately. In the current work, retrieving $p$ files remains possible by iterating the $1$-file PIR protocol $p$ times. Notice that this routine achieves the same PIR rate as the $1$-file PIR scheme.

In the general case, the PIR rate obtained in~\cite{DorksonN18} is $R' = \frac{pB}{dn}$, under the additional constraint that $n = pk + d$. We notice that $R'$ can be reformulated as follows:
\[
R' = \frac{n-d}{k} \cdot \frac{B}{nd} = \frac{n-d}{n}  \cdot \frac{B}{kd}\,.
\]
Assume that $k  \le d < n$, which is the case for non-degenerate PM-MBR codes. This implies that $\frac{n-d}{n}=1-\frac{d}{n} \le 1 - \frac{k}{n}$ and $\frac{B}{kd} < 1$, and therefore
\[
R ' < 1 - \frac{k}{n} < R_{\rm MBR}\,,
\]
where $R_{\rm MBR}$ is the PIR rate of the scheme we propose in the current work. We emphasize our improvement upon~\cite{DorksonN18} with the numerical and asymptotic analyses proposed in Figure~\ref{fig:comparison}.

% {\bf Case $p=1$.} The rate in \cite{DorksonN18} in the case of a single message ($p=1$) being retrieved is 
% \[
% R_{p=1} = \frac{k(2d-k+1)}{2d(k+d)}\,.
% \]
% We point out that in the regenerating code regime, we should always have $d>k$, while the authors compute their `optimal' rate for $d=k$. Regarding notation in \cite{DorksonN18}, we have $r=d=\alpha$ \BLUE{[do we need to point out this?]}

% \textcolor{red}{Cami: Setting $n=6,k=3,d=4$ we get a rate 21/31 (fix!). Changing to $k=2$, we get 84/123 (corrected!). This is better than the new paper in single-message case, which would respectively give $(r=d=\alpha, p=1)$ $R=7/24$ for $k=2$ and $R=9/28$ for $k=3$. For $n=40,k=15,d=20$ we get 0.67 vs. 0.28. Could make some plots with their rate added too. New paper rate w k=2, p=2,10 resp: 7/16, 35/48.}

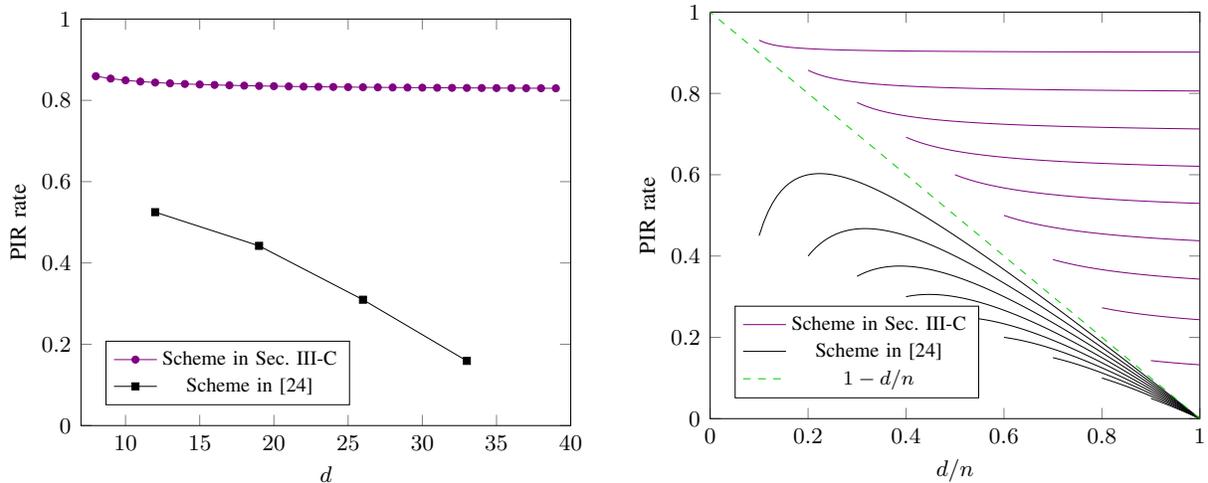
\begin{figure}
  \centering
  % \includegraphics[scale=0.2]{plot.png}
  % \begin{subfig}[t]{0.48\textwidth}
  \subfloat[
  PIR rate of both schemes, with a finite number of nodes $n$. We here set $n = 40$ and $k=7$, and we plot the PIR rate versus $d$. For fixed values of $n$ and $k$ and varying $k+1 \le d \le n-1$, the scheme in~\cite{DorksonN18} allows only a few admissible values of $p$, since $n = pk+d$ must hold. The larger the $p$, the larger the PIR rate of~\cite{DorksonN18}, but it remains bounded by the present scheme for every admissible value of $p$.
  ]
  {
    \begin{tikzpicture}[scale=0.95]
      
        \def\n{40.0}
        \def\k{7.0}

        \pgfplotsset{every tick label/.append style={font=\footnotesize}}
        
        \begin{axis}[
          xmin=\k,
          xmax=\n,
          ymin=0,
          ymax=1,
          xlabel={$d$},
          ylabel={PIR rate},
          xtick={5, 10, ..., 40},
          xlabel style={anchor = north, at={(0.5,-0.08), font=\footnotesize}},
          ylabel style={anchor = north, at={(-0.16,0.5), font=\footnotesize}},
          legend style= {anchor = south west, at={(0.05,0.05), font=\scriptsize}},
          cycle list name=mark list*,
          % yscale=0.8
          ]
          
          \tikzstyle{my_style}=[domain={\k+1}:{\n-1}, mark options={scale=0.7}, samples=\n-\k-1]

          \addplot+[my_style, color=violet]
          % [domain=\k+1:\n-1, color=violet!50!black, mark=square, mark options={scale=0.5, green!50!black}, samples=\n-\k-1]
          {(3*(\n-\k)*(2*x-\k+1))/(6*\n*x - 3*\n*\k + 3*\n - \k*\k + 1)}; 
          \addlegendentry{Scheme in Sec.~\ref{subsec:PIR-MBR}}

          \addplot+[my_style, color=black, domain=(\n-(ceil(\n/\k)-2)*\k):\n-\k, samples=ceil(\n/\k)-2]
          {(\n-x)*(2*x-\k+1))/(2*x*\n)}; 
          \addlegendentry{Scheme in~\begin{NoHyper}\cite{DorksonN18}\end{NoHyper}}
          
        \end{axis} 
      \end{tikzpicture}
    }
    ~ 
    %\begin{subfig}[t]{0.48\textwidth}
    \subfloat[PIR rate of both schemes, with an asymptotic number of nodes $n$. Each curve represents a distinct value of $k/n \in \{ 0.1, \dots, 0.9 \}$, and we plot the PIR rate versus $d/n$.]{
      \begin{tikzpicture}[scale=0.95]
        \pgfplotsset{every tick label/.append style={font=\footnotesize}}
        
        \begin{axis}[
          xmin=0,
          xmax=1,
          ymin=0,
          ymax=1,
          xlabel={$d/n$},
          ylabel={PIR rate},
          xlabel style={anchor = north, at={(0.5,-0.08), font=\footnotesize}},
          ylabel style={anchor = north, at={(-0.16,0.5), font=\footnotesize}},
          legend style= {anchor = south west, at={(0.05,0.05), font=\scriptsize}},
          cycle list name=mark list*,
          % yscale=0.8
          ]

          \tikzstyle{my_style}=[mark=none, samples=100]
          
          \foreach \g in {0.1,0.2,...,0.8} {
            \addplot+[my_style, color=violet, domain=\g:1, forget plot]
            {(3*(1-\g)*(2*x-\g))/(6*x - 3*\g - \g*\g)};         
          }
          \def\g{0.9}
          \addplot+[my_style, color=violet, domain=\g:1]
          {(3*(1-\g)*(2*x-\g))/(6*x - 3*\g - \g*\g)}; 
          \addlegendentry{Scheme in Sec.~\ref{subsec:PIR-MBR}}

          \foreach \g in {0.1,0.2,...,0.8} {    
            \addplot+[my_style, color=black, domain=\g:1, forget plot]
            {(1-x)*(2*x-\g))/(2*x)}; 
          }   
          \def\g{0.9} 
          \addplot+[my_style, color=black, domain=\g:1]
          {(1-x)*(2*x-\g))/(2*x)}; 
          \addlegendentry{Scheme in~\begin{NoHyper}\cite{DorksonN18}\end{NoHyper}}
          
          \addplot+[my_style, color=green!80!black, dashed]
          {1-x}; 
          \addlegendentry{$1 - d/n$}

        \end{axis} 
      \end{tikzpicture}
}
    %\end{subfig}
  \caption{\label{fig:comparison}Comparison between PIR rates of the multi-file PIR scheme in~\cite{DorksonN18} and the PIR scheme in the present paper.
 % \RED{Cami: Explain the role of $p$ here and when their system leads to better rates, if it does.}
  }
\end{figure}

\subsubsection{Comparison with the asymptotic capacities of scalar MDS codes}

Since PM-MBR codes allow to retrieve files by contacting only $k$ nodes among $n$, it is somewhat relevant to compare the proposed scheme with PIR schemes over $[n, k]$ MDS-coded data. We can also motivate this comparison by the following example.
\begin{example}
  In the PIR scheme presented in Example~\ref{ex:PIR-MBR}, the queried file has size $(n-k)B = 27$, while the user needs to download $18 + 18 + 10 + 4 = 50$ symbols. Hence, the PIR rate is $27/50$, which is larger than $1-k/n=1/2$, the PIR capacity of an $[n,k]$ MDS code, but smaller than $1-B/nd$, the PIR capacity of an $[nd,B]$ MDS code.
  
  However, in the MBR construction, $d$ symbols are stored on a single server. Therefore, considering the storage code as an $[nd, B]$ linear code, a PIR protocol must resist to \emph{some} sets of colluding nodes of size $d$ (also known as \emph{partial collusion}). In this setting, we can compare our construction to the conjectured PIR capacity $1 - \frac{B+d-1}{nd}$ of $[nd, B]$ linear codes with \emph{full} $d$-collusion~\cite{FreijGHK17}. In the current example, the conjectured capacity is then $1/2$, which is again below the achieved rate.
  %If each of these symbols is assumed to be stored on a separate server, the linear $[nd, B]$ code has a PIR capacity bounded by $1-B/nd$ if the servers do not collude.
  % This capacity, however, if we look at this code as an $[nd,b]$ code, then it is a code with $d$ colluding servers which can not achieve the $1-B/nd$ rate. \textcolor{blue}{Say that it is probably not achievable since those are vector codes..}
  % PM-MBR code is not an $[nd, B]$ MDS code. \RED{Cami: non-MDS codes can achieve the MDS-capacity as we know, remove this statement and rather refer to the fact that we have partial collusion.} %\BLUE{[todo: explain more]}
\end{example}

\begin{lemma}
  The PIR rate $R_{\rm MBR}$ of the scheme from Theorem~\ref{thm:PIR-MBR} satisfies:
  \[
  1 - \frac{k}{n} \le R_{\rm MBR} \le 1 - \frac{B}{nd}\,.
  \]
\end{lemma}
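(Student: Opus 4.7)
The plan is to work with the reformulation
\[
R_{\rm MBR} = \frac{(n-k)B}{nB - c}, \qquad c \mydef \frac{k(k^2-1)}{6} \ge 0,
\]
which is Equation~\eqref{eq:computation-PIR-rate-2} rearranged; this form isolates the two bounds transparently in terms of $B$ and $c$. The lower bound is then essentially free: since $c \ge 0$, one has $nB - c \le nB$, so
\[
R_{\rm MBR} \ge \frac{(n-k)B}{nB} = 1 - \frac{k}{n},
\]
with equality precisely when $k=1$. I would record this one-line computation and move on.

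For the upper bound $R_{\rm MBR} \le 1 - B/(nd)$, my strategy is to cross-multiply and reduce to a polynomial inequality that can be verified by elementary estimation. Using the identity $B - dk = -k(k-1)/2$ (an immediate consequence of $B = kd - k(k-1)/2$), routine bookkeeping rewrites the inequality as $c(nd - B) \le \tfrac{1}{2} n B k(k-1)$. The case $k=1$ is trivial. For $k \ge 2$, substituting $c = k(k-1)(k+1)/6$ and dividing by $k(k-1)/2$ further reduces the statement to the clean target
\[
(k+1)(nd - B) \le 3 n B.
\]

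To finish I would expand the difference $3nB - (k+1)(nd - B)$ as a linear function of $d$, namely
\[
\bigl[n(2k-1) + k(k+1)\bigr]\, d \;-\; \tfrac{k(k-1)}{2}\,(3n + k + 1).
\]
Its slope in $d$ is positive, and its value at the smallest admissible choice $d=k$ collapses to $\tfrac{k(k+1)}{2}(n + k + 1) \ge 0$, so the expression stays non-negative throughout the admissible regime $d \ge k$. The only mildly delicate piece of the argument is the algebraic manipulation that strips the inequality down to the clean form $(k+1)(nd-B) \le 3nB$; once this is isolated, the verification is routine and presents no real obstacle.
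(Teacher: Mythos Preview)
Your argument is correct. The lower bound is handled in essentially the same way as in the paper: the paper bounds each summand $j(n-k+j)\le jn$ in the denominator of~\eqref{eq:computation-PIR-rate}, which is exactly the observation $c\ge 0$ in the closed form~\eqref{eq:computation-PIR-rate-2}.

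For the upper bound your route diverges from the paper's. The paper works with the fully expanded expression from Theorem~\ref{thm:PIR-MBR}, sets $\Delta=(nd-B)(6dn-3nk+3n-k^2+1)-3(n-k)(2d-k+1)nd$, expands and factors out $(k-1)$, and then splits into the cases $d=k$ and $d\ge k+1$ to conclude $\Delta\ge 0$. You instead stay with the intermediate form $R_{\rm MBR}=(n-k)B/(nB-c)$, use the identity $B-kd=-k(k-1)/2$ to strip the inequality down to $(k+1)(nd-B)\le 3nB$, and then exploit that the difference is affine in $d$ with positive slope, so it suffices to check the endpoint $d=k$. This is tidier: it avoids the case split, keeps $B$ as a variable rather than unpacking it prematurely, and makes the role of the constraint $d\ge k$ transparent. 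The paper's computation is more brute-force but has the minor advantage of being entirely mechanical once $\Delta$ is written down.
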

\begin{proof}
  If $1 \le j \le k$, it is clear that $n - k + j \le n$. Using this trivial observation in Equation~\eqref{eq:computation-PIR-rate}, we get 
\[
R_{\rm MBR} \ge \frac{(n-k)\left((d-k)k + k(k+1)/2\right)}{n(d-k)k + n\sum_{j=1}^{k}j} = \frac{n-k}{n} = 1 - \frac{k}{n}\,.
\]

The right-hand-side inequality is a bit more technical to state. %\BLUE{[if you guys have a simpler proof...]}
Using the expression of $R_{\rm MBR}$ given in Theorem~\ref{thm:PIR-MBR}, it is equivalent to prove that
\[
\Delta \mydef (nd - B)(6dn-3nk+3n-k^2+1) - 3(n-k)(2d-k+1)nd
\]
is non-negative. A computation shows that:
\[
\begin{aligned}
  2 \Delta &= (2nd -2kd+k^2-k)(6nd-3nk+3n-k^2+1) - 6nd(n-k)(2d-k+1) \\
           &= 6nd((2nd-2kd+k^2-k) - (n-k)(2d-k+1)) - (2nd-2kd+k^2-k)(k^2-1+3nk-3n)\\
           &= 6n^2d(k-1)    - (2nd-2kd+k^2-k)(k-1)(3n+k+1) \\
           &= (k-1)[6n^2d-(2nd-2kd+k^2-k)(3n+k+1)]\,.
\end{aligned}
\]

If $k=d$, then we get  $2 \Delta = k(k-1)(k+1)(n - (k+1)) \ge 0$ as long as $n \ge k+1$ which must hold for non-degenerated MBR codes.

If $d \ge k+1$, as it is for a non-trivial regenerating code, then we get
\[
\begin{aligned}
  \frac{2 \Delta}{k-1} &= d((k-1)(4n+2k+3) + 2n+2) - (k-1)(k+1)(3n+k+1) \\
           &\ge (k+1)((k-1)(4n+2k+3) + 2n+2) - (k-1)(k+1)(3n+k+1) \\
           &\ge (k+1)(k-1)(n+k+2) + 2(k+1)(n+1)\\
           &\ge 0\,.
\end{aligned}
\]
\end{proof}

%\GREEN{Ragnar: Can we simplify this argument, using less formulas? Let's at least think about it for the revision round.}
%\BLUE{[Julien: I tried the first time, but didn't manage]}\BLUE{[Razan: I tried too and couldn't.. :/]}

%\textcolor{red}{Cami:Here's a short rate comparison for our previous rate $1-k/n$. Please repeat this exercise to show when $R_{MBR}>R_*$ (after simplifying $R_{MBR}$).\\
%(should define the rates below more carefully...)}

% \BLUE{[Julien: here I don't remember why we wanted to compare with $1-k/n$. In Figure~\ref{fig:aweful-formula} I give formal solutions to the equation $R = 1 - \frac{B+d-1}{nd}$, where $R$ is our rate. They are quite awful. Maybe a numerical analysis would be better. See also Figure~\ref{fig:comparison-bounds}]}

We can also model the $n$ servers storing $\alpha=d$ symbols each as an $nd$-tuple of ``virtual'' or ``sub''-servers storing one symbol each. In this setting, some $d$-tuples of servers collude with one another. For that reason, it is relevant to compare the PIR rate of this scheme with the (conjectured) capacity of a PIR scheme for an $[nd, B]$ MDS-coded storage system allowing collusions of servers of size up to $\alpha = d$.
This conjectured capacity is $1- \frac{B+d-1}{nd}$~\cite{FreijGHK17}.
% Let us calculate next when the RS rate $R_{RS} = 1-k/n$ is larger than the coded colluded capacity $R_*$ of an $(n\alpha,B)$ linear code, \emph{i.e.}, 
% $$
% 1-\frac{k}{n}>1-\frac{B+\alpha-1}{n\alpha}\,.
% $$
% \GREEN{Razan: I am not quite sure what the significance of this comparison is. The comparisons in the figure, that compares all those rates to our rate are important, but this specific comparison, I do not really see the point of it. I mean this does not even compare our rate.}
Note that the assumption of full $d$-collusion is pessimistic since in this setting, not any $d$ servers can collude, rather there exist disjoint sets of colluding servers that are known a priori, cf. \cite{tajeddine2017private}. %\textcolor{red}{Cami: Btw, what is we assume (full) $t$-collusion in addition to this a priori disjoint $\alpha$ collusion? Should we try to address that as well?}

A comparison of the rate of the PIR scheme constructed in this paper with the other relevant capacity expressions of PIR schemes discussed in this section is shown in Figure~\ref{fig:comparison-bounds} for different values of $n,k$ and $d$. We can see that the achieved rate in our scheme is higher than the PIR capacity of an $[n,k]$ MDS code, and for a reasonably high value of $d$, the achievable PIR rate for the scheme described in Section~\ref{subsec:PIR-MBR}. As explained before, the achieved rate is always lower than the PIR capacity of an $[nd,B]$ MDS code. %\RED{cami: And...? Reflect the results.}

\begin{figure}
  \centering
  % \includegraphics[scale=0.2]{plot.png}
  % \begin{subfig}[t]{0.48\textwidth}
    \subfloat[PIR rate versus $d$ when $n=40$ and fixed $k=7$.]{
      \begin{tikzpicture}[scale=0.95]
      
      \def\n{40.0}
      \def\k{7.0}

      \pgfplotsset{every tick label/.append style={font=\footnotesize}}
      
      \begin{axis}[
        xmin=\k,
        xmax=\n,
        ymin=0.78,
        ymax=0.92,
        xlabel={$d$},
        ylabel={PIR rate},
        xtick={5, 10, ..., 40},
        xlabel style={anchor = north, at={(0.5,-0.08), font=\footnotesize}},
        ylabel style={anchor = north, at={(-0.16,0.5), font=\footnotesize}},
        legend style= {anchor = north east, at={(0.95,0.95), font=\scriptsize}},
        cycle list name=mark list*,
        % yscale=0.8
        ]      

        \tikzstyle{my_style}=[domain={\k+1}:{\n-1}, mark options={scale=0.7}, samples=\n-\k-1]

        \addplot+[my_style, color=violet]
        % [domain=\k+1:\n-1, color=violet!50!black, mark=square, mark options={scale=0.5, green!50!black}, samples=\n-\k-1]
        {(3*(\n-\k)*(2*x-\k+1))/(6*\n*x - 3*\n*\k + 3*\n - \k*\k + 1)}; 
        \addlegendentry{Scheme in Sec.~\ref{subsec:PIR-MBR}}
        
        \addplot+[my_style, color=blue!80!black]
        % [domain=\k+1:\n-1, color=blue!50!black, mark=square, mark options={scale=0.5, blue!50!black}, samples=\n-\k-1]
        {1-\k/\n};      
        \addlegendentry{$1 - \frac{k}{n}$}

        \addplot+[my_style, color=orange!80!black]
        % [domain=\k+1:\n-1, color=orange!50!black, mark=square, mark options={scale=0.5, orange!50!black}, samples=\n-\k-1]
        {1 - ((\k*(\k+1)/2 + \k*(x-\k))/(\n*x))}; 
        \addlegendentry{$1 - \frac{B}{dn}$}

        \addplot+[my_style, color=red]
        % [domain=\k+1:\n-1, color=orange!50!black, mark=square, mark options={scale=0.5, orange!50!black}, samples=\n-\k-1]
        {1 - ((\k*(\k+1)/2 + \k*(x-\k))/(\n*x)) - (x-1)/(\n*x)}; 
        \addlegendentry{$1 - \frac{B+(d-1)}{dn}$}

      \end{axis} 
    \end{tikzpicture}
    }
    % \caption{PIR rate versus $d$ when $n=40$ and fixed $k=7$.}
    % \end{subfig}
    ~
    % \begin{subfig}[t]{0.48\textwidth}
    \subfloat[PIR rate $R_{\rm MBR}$ versus $d$ when $n = 40$, assuming $d = 2(k-1)$.]{  
    \begin{tikzpicture}[scale=0.95]
        
        \def\n{40.0}
        \def\dmin{2}
        \def\dmax{\n}

        \pgfplotsset{every tick label/.append style={font=\footnotesize}}
        
        \begin{axis}[
          xmin=\dmin,
          xmax=\dmax,
          ymin=0.4,
          ymax=1,
          xlabel={$d$},
          ylabel={PIR rate},
          xtick={5, 10, ..., 40},
          xlabel style={anchor = north, at={(0.5,-0.08), font=\footnotesize}},
          ylabel style={anchor = north, at={(-0.16,0.5), font=\footnotesize}},
          legend style= {anchor = south west, at={(0.05,0.05), font=\scriptsize}},
          cycle list name=mark list*,
          ]      

          \tikzstyle{my_style}=[domain={\dmin+2}:{\n-2}, mark options={scale=0.7}, samples={(\dmax-\dmin)/2-1}]

          \addplot+[my_style, color=violet]
          % [domain=\k+1:\n-1, color=violet!50!black, mark=square, mark options={scale=0.5, green!50!black}, samples=\n-\k-1]
          {(3*(\n-(x/2+1))*(2*x-(x/2+1)+1))/(6*\n*x - 3*\n*(x/2+1) + 3*\n - (x/2+1)*(x/2+1) + 1)}; 
          \addlegendentry{Scheme in Sec.~\ref{subsec:PIR-MBR}}
          
          \addplot+[my_style, color=blue!80!black]
          % [domain=\k+1:\n-1, color=blue!50!black, mark=square, mark options={scale=0.5, blue!50!black}, samples=\n-\k-1]
          {1-(x/2+1)/\n};      
          \addlegendentry{$1 - \frac{k}{n}$}

          \addplot+[my_style, color=orange!80!black]
          % [domain=\k+1:\n-1, color=orange!50!black, mark=square, mark options={scale=0.5, orange!50!black}, samples=\n-\k-1]
          {1 - (((x/2+1)*((x/2+1)+1)/2 + (x/2+1)*(x-(x/2+1)))/(\n*x))}; 
          \addlegendentry{$1 - \frac{B}{dn}$}

          \addplot+[my_style, color=red]
          % [domain=\k+1:\n-1, color=orange!50!black, mark=square, mark options={scale=0.5, orange!50!black}, samples=\n-\k-1]
          {1 - (((x/2+1)*+((x/2+1)+1)/2 + (x/2+1)*(x-(x/2+1)))/(\n*x)) - (x-1)/(\n*x)}; 
          \addlegendentry{$1 - \frac{B+(d-1)}{dn}$}
        \end{axis} 
      \end{tikzpicture}
      % \caption{PIR rate $R_{\rm MBR}$ versus $d$ when $n = 40$, assuming $d = 2(k-1)$.}
    }
  \caption{\label{fig:comparison-bounds}PIR rate versus $d$ when $n=40$. Comparison between the PIR rate of the scheme in the current work, the capacity $1- k/n$ of a PIR scheme for an $[n,k]$ MDS coded storage system with no collusion, the capacity $1- B/nd$ of a PIR scheme for an $[nd, B]$ MDS coded storage system with no collusion, and the conjectured capacity $1 - (B+d-1)/nd$ of a PIR scheme for an $[nd,B]$ MDS-coded storage system with full $d$ collusion.}
\end{figure}
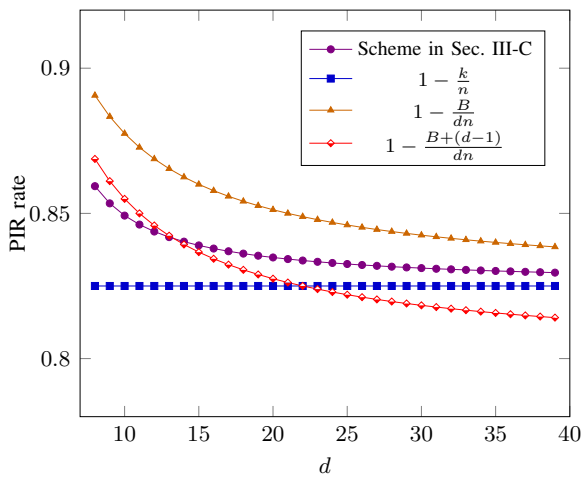
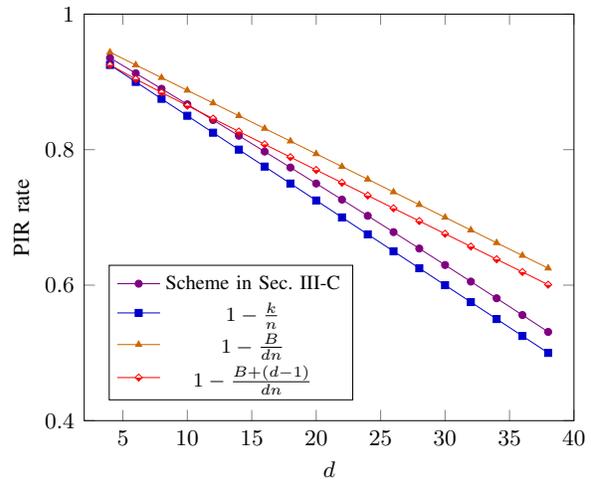

\section{A PIR scheme in the MSR setting}
\label{sec:PIR-MSR}

We consider a regenerating code $\calC$ attaining the MSR point. As explained in Section~\ref{subsubsec:def-MSR}, we restrict our work on the setting $d = 2k-2 = 2 \alpha$ for simplicity. Hence $\calC$ is also a linear code over $\FF_q$ of length $nd=2n\alpha$ and dimension $B = \alpha(\alpha+1)$. 

\subsection{System setup}

 Similarly to the MBR setting, we consider a storage system $\bfX$ of $F$ files $\bfX^1, \dots, \bfX^F$, each storing $B = \alpha(\alpha+1)$ information symbols. The symbols of the file $\bfX^f$, $1 \le f \le F$, are arranged into $S= n-2\alpha$ stripes, such that the message $\bfM^f$ can be written
\[
\bfM^f = \left( M^f[i,j,s],
  {
    \scriptsize
    \begin{array}{l}
      1 \le i \le 2 \alpha \\
      1 \le j \le \alpha \\
      1 \le s \le S
    \end{array}
  }
\right).
\]
By construction of the MSR code $\calC$, for all $1 \le f \le F$ and all $1 \le i,j \le \alpha$, we have 
\[
\bfM^f[i, j, \bcdot] = \bfM^f[j, i ,\bcdot] = \bfM^f[\alpha + i, j, \bcdot]\,.
\]
Moreover, for every $j, s, f$, the column $\bfM^f[\bcdot, j, s] \in \FF_q^{2\alpha}$ is encoded into a Reed-Solomon codeword $\bfC^f[\bcdot, j, s] \in \RS_{2\alpha}(\bfx)$ by
\[
 \bfC^f[\bcdot, j, s] = \sum_{r=1}^{2\alpha} M^f[r, j, s] \bfgamma_r,
\]
where we recall that $\{ \bfgamma_1, \dots, \bfgamma_{2\alpha} \}$ denotes a suitable basis for sequences of Reed-Solomon codes (see Section~\ref{subsec:notation}).

\subsection{The PIR scheme}
\label{subsec:PIR-MSR}

Assume the user wants to retrieve file $\bfX^{f_0}$ privately. We consider a $2\alpha$-tuple of queries $\bfQ = (\bfQ_1, \dots, \bfQ_{2\alpha})$ having the following form for $1 \le \ell \le 2\alpha$:
\[
\bfQ_\ell =  \left( Q^f_\ell[i,s],
  {
    \scriptsize
    \begin{array}{l}
      1 \le i \le n \\
      1 \le s \le S\\
      1 \le f \le F
    \end{array}
  }
\right).
\]
Once again, $\bfQ_\ell$ does \emph{not} depend on the column index $j \in [1, \alpha]$, preventing to leak information on the requested file.
\medskip

  {\bf Generation of $\bfQ$.} Similar to the MBR setting, queries $\bfQ$ are defined by $\bfQ \mydef \bfD + \bfE^{(f_0)}$ with $\bfD$ and $\bfE^{(f_0)}$ defined as follows.
\begin{enumerate}
  \item For every $\ell, s, f$, the random vector $\bfD^f_\ell[\bcdot, s] \in \FF_q^n$ is a word picked uniformly at random from the repetition code of length $n$.
  \item The retrieval pattern $\bfE^{(f_0)}$ is defined by
    \[
    E^{(f_0), f}_\ell[i, s] = 
    \left\{
      \begin{array}{ll}
        1 &\text{if } f=f_0 \text{ and } n-i = \ell + s - 2\quad (\text{mod } S),\\
        0 &\text{otherwise,}
      \end{array}
    \right.
    \]
    for every $1 \le \ell \le 2\alpha$, $1 \le i \le n$, $1 \le s \le S$ and $1\le f \le F$.
\end{enumerate}
\medskip

  {\bf Server responses to queries.} Given a column $1 \le j \le \alpha$, only servers $S_i$ such that $2\alpha-2j+1 \le i \le n$ are required to send the subset of responses $R_\ell[i,j]$, for $1 \le \ell \le 2j$. 
\medskip

  {\bf Reconstruction of $\bfX^{f_0}$.} The recovery is run columnwise, from column $\alpha$ down to $1$. In every step $1 \le j \le \alpha$, the goal is to retrieve $\bfM^{f_0}[\bcdot, j, \bcdot]$ as well as some random vectors. The recovery procedure is identical to that of the first columns of the MBR case. Column $\alpha$ is retrieved using a classical PIR protocol on MDS codes, as in \cite{tajeddine2016private}. Here, the underlying storage code is $\RS_{2\alpha}(\bfx)$. Similarly to the MBR case, the user retrieves pieces of the required file, along with some randomness. The collected symbols from column $\alpha$ (randomness and information symbols) can be reused in column $\alpha-1$ to again retrieve other pieces of the required file and associated randomness. This process is then repeated until retrieving the information from column $1$. This iterative process reduces the number of total downloaded symbols to retrieve the required file $\bfX^{f_0}$, and consequently reduces the PIR rate. We refer to Lemma~\ref{lem:correctness-MSR} for technical details.
\medskip

We give a simple example to explain the scheme. 

\begin{example}
  \label{ex:PIR-MSR}
  We use the $(6, 3, 4)$ PM-MSR regenerating code presented in Example~\ref{ex:MSR}, with $\alpha = 2$. Files are divided into $S=n-2\alpha=2$ stripes, and the user sends $2\alpha=4$ vectors of queries:
  \[
  \centering
  \begin{array}{|c|c|c|c|c|}
    \hline
                       & \text{Query } 1   & \text{Query } 2   & \text{Query } 3    & \text{Query } 4   \\\hline
    \text{Server } S_1 & \bfu              & \bfv              & \bfw               & \bfy              \\\hline
    \text{Server } S_2 & \bfu              & \bfv              & \bfw               & \bfy              \\\hline
    \text{Server } S_3 & \bfu  & \bfv  & \bfw + \bfe_{f_0,1} & \bfy + \bfe_{f_0,2} \\\hline
    \text{Server } S_4 & \bfu  & \bfv  & \bfw + \bfe_{f_0,2} & \bfy + \bfe_{f_0,1}  \\\hline
    \text{Server } S_5 & \bfu + \bfe_{f_0,1} & \bfv + \bfe_{f_0,2} & \bfw  & \bfy  \\\hline
    \text{Server } S_6 & \bfu + \bfe_{f_0,2} & \bfv + \bfe_{f_0,1} & \bfw  & \bfy \\\hline
  \end{array}
  \]  
  The vector $\bfe_{f_0,s_0} \in \FF_q^{F \times (n-2\alpha)}$ is the all zero vector with a single $1$ in position $(f_0,s_0)$, \emph{i.e.}, indicating stripe $s_0$ from file $\bfX^{f_0}$. Vectors $\bfu, \bfv, \bfw, \bfy \in \FF_q^{F \times S}$ are random vectors.
  
  The servers project the data stored in column $2$ on all the queries. Servers $S_1$ and $S_2$ do not respond to any other queries. Servers $S_3,\dots, S_6$ project only the first $2$ queries on the data stored in the first column.
\end{example}

\subsection{Proofs}

  For $1 \le j \le \alpha$, we define the $2j$-dimensional code
  \[
  \calC_j \mydef \RS_j(\bfx) + \langle \bfx^\alpha \rangle \star \RS_j(\bfx) \subseteq \FF_q^n\,.
  \]

\begin{lemma}
  There exists a sequence $I_1 \subset \dots \subset I_\alpha \subset [1,n]$ such that, for every $1 \le j \le \alpha$, $I_j$ is an information set for  the code $\calC_j$.
\end{lemma}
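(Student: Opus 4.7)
The plan is to proceed by induction on $j$, starting from $I_0 := \emptyset$ and, at each inductive step, extending $I_{j-1}$ by exactly two indices to obtain $I_j$. The approach is matroid-theoretic: rather than exhibiting the sets $I_j$ explicitly, I would show that $I_{j-1}$ is necessarily independent in the vector matroid associated to a natural generator matrix of $\calC_j$, and then invoke the matroid extension axiom to enlarge it to a basis---equivalently, an information set---of size $2j$.

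The key observation is that the bases of the nested codes $\calC_1 \subseteq \cdots \subseteq \calC_\alpha$ can themselves be chosen nested. Namely, $\{\bfgamma_1, \ldots, \bfgamma_{j-1}, \bfgamma_{\alpha+1}, \ldots, \bfgamma_{\alpha+j-1}\}$ is a basis of $\calC_{j-1}$, and appending $\bfgamma_j$ and $\bfgamma_{\alpha+j}$ yields a basis of $\calC_j$. Choosing generator matrices $G_{j-1}$ and $G_j$ whose rows list the basis vectors in this order, $G_{j-1}$ sits as the top $2(j-1)$ rows of $G_j$. Consequently, any non-trivial linear combination of the columns of $G_j$ at positions in $I_{j-1}$ that sums to zero in $\FF_q^{2j}$ would restrict, on the top $2(j-1)$ coordinates, to a non-trivial combination of the columns of $G_{j-1}$ at $I_{j-1}$ summing to zero in $\FF_q^{2(j-1)}$. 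This contradicts the induction hypothesis that $I_{j-1}$ is an information set for $\calC_{j-1}$.

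Therefore, $I_{j-1}$ is independent in the rank-$2j$ vector matroid of $G_j$; note that this matroid has rank exactly $2j$ because the monomials $X, \ldots, X^j, X^{\alpha+1}, \ldots, X^{\alpha+j}$ are linearly independent in $\FF_q[X]$ and their evaluation map is injective on polynomials of degree at most $\alpha+j$ provided $n \ge \alpha+j$, which holds since $n \ge 2\alpha \ge \alpha+j$. By the matroid extension axiom, two indices from $[1,n] \setminus I_{j-1}$ can then be adjoined to $I_{j-1}$ to obtain a basis $I_j$ of size $2j$, which by definition is an information set for $\calC_j$. Iterating this construction yields the required chain $I_1 \subset \cdots \subset I_\alpha$.

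I do not expect a serious obstacle here. The only substantive content is the basis-nesting of the $\calC_j$, which makes the matroid-extension argument go through without any explicit combinatorial choice of the $I_j$; everything else is essentially bookkeeping.
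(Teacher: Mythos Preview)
Your argument is correct. Both your proof and the paper's rest on the same key fact, the nesting $\calC_{j-1} \subset \calC_j$, but you run the induction in opposite directions. The paper goes top-down: it fixes any $2\alpha$-subset as $I_\alpha$ (using that $\calC_\alpha = \RS_{2\alpha}(\bfx)$ is MDS) and then observes that, since $\calC_{j-1} \subset \calC_j$, any information set for $\calC_j$ already contains an information set for $\calC_{j-1}$. You go bottom-up, starting from $I_0 = \emptyset$ and invoking the matroid extension axiom to grow $I_{j-1}$ into $I_j$, after checking via the nested monomial bases that the columns of $G_j$ indexed by $I_{j-1}$ remain independent. The paper's route is slightly terser because the single observation ``information set for a code contains an information set for any subcode'' does all the work, whereas your route makes the extension step explicit through generator matrices and matroid language. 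Substantively the two are equivalent; neither gains anything the other lacks.
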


\begin{proof}
  We prove the result inductively. First notice that $\calC_\alpha = \RS_{2\alpha}(\bfx)$, hence one can choose any $2\alpha$-subset for $I_\alpha$. Then, it is sufficient to notice that for every $2 \le j \le \alpha$, we have $\calC_{j-1} \subset \calC_j$. Hence, an information set $I_j$ for $\calC_j$ contains an information set for $\calC_{j-1}$.
\end{proof}

The previous lemma allows us to make the following assumption: after reordering the servers (\emph{i.e.} the evaluation points $\bfx$), we can assume that $I_j = [2\alpha - 2j +1, 2\alpha]$ for every $1 \le j \le \alpha$. Moreover, we define the code $\calA_j \subseteq \FF_q^{n-2\alpha + 2j}$ as the puncturing of $\calC_j$ on its $(2\alpha - 2j)$ first coordinates. The code $\calA_j$ has length $n-2\alpha + 2j$ and dimension $2j$, and by the chosen order of coordinates, its $2j$ first coordinates form an information set.

%\GREEN{Ragnar: Again, in this lemma, I would like more mathematical precision, both in the formulation and more importantly in the proof. See inline comments.}
\begin{lemma}
  \label{lem:correctness-MSR}
  Let $1 \le j \le \alpha$. For every $1 \le \ell \le 2j$, we denote \[\bfR_\ell[\bcdot, j] \mydef (R_\ell[2\alpha-2j+1,j], \dots, R_\ell[n,j]) \in \FF_q^{n-2\alpha+2j}.\] Then, conditioned on $(\bfR_1[\bcdot, j], \dots, \bfR_{2j}[\bcdot, j])$ and on 
  \begin{equation}
    \label{eq:random-vectors-correctness-MSR}
  \sum_{s,f} M^f[r, j, s] \bfD_\ell^f[\bcdot, s] \quad, \text{for all } \; j+1 \le r \le \alpha, \quad 1 \le \ell \le 2j\,,
  \end{equation}
  the following is determined:
  \begin{itemize}
  \item the piece $\bfM^{f_0}[\bcdot, j, \bcdot]$ of the desired file;
  \item the random vectors $\sum_{s,f} M^f[r,j,s] \bfD_\ell^f[\bcdot, s] \in \FF_q^n$, for all $1 \le r \le 2\alpha$ and every $1 \le \ell \le 2j$.
  \end{itemize}
\end{lemma}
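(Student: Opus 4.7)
The plan is to emulate the argument of Lemma~\ref{lem:correctness-first-columns}, with the punctured code $\calA_j$ now playing the role that $\RS_j(\bfx')$ played there. I would begin by decomposing each response as
\[
\bfR_\ell[\bcdot, j] = \bfA_\ell[\bcdot, j] + \bfB_\ell[\bcdot, j],
\]
where $\bfA_\ell$ is the contribution of the random part $\bfD$ of the query and $\bfB_\ell$ that of the retrieval pattern $\bfE^{(f_0)}$. The cyclic shape of $\bfE^{(f_0)}$, identical to the MBR case, forces $\bfB_\ell$ to be supported on the original coordinates $[2\alpha+1, n]$; since each $\bfD_\ell^f[\bcdot, s]$ is a scalar multiple of $\bfone$, the vector $\bfA_\ell[\bcdot, j]$ lies in $\RS_{2\alpha}(\bfx)$ with an explicit expansion in $\{\bfgamma_r\}_{r=1}^{2\alpha}$ whose coefficients are scalars $\mu_\ell^{(r)}$ satisfying $\mu_\ell^{(r)}\bfone = \sum_{s,f} M^f[r,j,s]\bfD_\ell^f[\bcdot,s]$.

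Next, I would peel off the already-known contributions. The hypothesis delivers $\mu_\ell^{(r)}$ for $r \in [j+1, \alpha]$, and the symmetry of the second block $\bfS_2$ in the message matrix, combined with the inductive availability of columns $j' > j$, delivers $\mu_\ell^{(r)}$ for $r \in [\alpha+j+1, 2\alpha]$. Subtracting these $2(\alpha - j)$ terms from $\bfR_\ell[\bcdot, j]$ yields a residual $\bfR''_\ell = \bfA''_\ell + \bfB_\ell$ in which $\bfA''_\ell$ lies in the $2j$-dimensional code $\calA_j$. By the reordering assumption preceding the statement, the first $2j$ punctured coordinates form an information set of $\calA_j$; since $\bfB_\ell$ vanishes on those coordinates, the restriction of $\bfR''_\ell$ there pins down $\bfA''_\ell$ uniquely. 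Subtracting gives $\bfB_\ell$, and expanding $\bfA''_\ell$ in the basis $\{\bfgamma_r\}_{r \in [1,j]\cup[\alpha+1,\alpha+j]}$ of $\calC_j$ recovers the remaining random coefficients, establishing the second bullet of the lemma.

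Finally, I would extract $\bfM^{f_0}[\bcdot, j, \bcdot]$ from the $\bfB_\ell$'s. For each stripe $s$, reading off $B_\ell[i,j] = C^{f_0}[i,j,s]$ at the $2j$ positions $i = n - ((\ell + s - 2)\bmod S)$ (as $\ell$ ranges over $[1, 2j]$) and subtracting the inductively-known $M^{f_0}[r,j,s]$ for $r \in [j+1,\alpha]\cup[\alpha+j+1,2\alpha]$ reduces the problem to a $2j \times 2j$ linear system in the remaining entries $M^{f_0}[r,j,s]$ for $r \in [1,j]\cup[\alpha+1,\alpha+j]$. The coefficient matrix of this system is precisely the evaluation of a basis of $\calC_j$ at those $2j$ positions.

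The main technical obstacle I anticipate is verifying that this last $2j\times 2j$ system is invertible for every stripe $s$: the earlier information-set lemma guarantees decodability of $\calC_j$ only from the specific window $[2\alpha-2j+1, 2\alpha]$, whereas the positions arising from the cyclic retrieval pattern are shifts of $[n-2j+1, n]$. Closing this step will likely require either a careful choice of the evaluation points $\bfx$ ensuring that every relevant $2j$-subset is an information set of $\calC_j$, or a global argument pooling the $2j \cdot S$ equations across all stripes into a single invertible system, exploiting the diagonal structure of $\bfE^{(f_0)}$.
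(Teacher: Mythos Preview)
Your outline matches the paper's proof step for step: the decomposition $\bfR_\ell = \bfA_\ell + \bfB_\ell$, the subtraction of the $2(\alpha-j)$ known coefficients (indexed there by $J_2 = [0,2\alpha-1]\setminus([0,j-1]\cup[\alpha,\alpha+j-1])$), the identification $\bfA''_\ell \in \calA_j$, and the separation of $\bfA''_\ell$ from $\bfB_\ell$ via the information set $I_j = [2\alpha-2j+1,2\alpha]$ are all exactly as in the text. The paper then disposes of the extraction of $\bfM^{f_0}[\bcdot, j, \bcdot]$ from $\bfB_1,\dots,\bfB_{2j}$ in a single sentence, declaring it ``identical to Lemma~\ref{lem:correctness-first-columns}''.

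The obstacle you flag in that last step is genuine, and the paper does not resolve it either. In Lemma~\ref{lem:correctness-first-columns} the code remaining after peeling is a Reed--Solomon code $\RS_j$, hence MDS, so \emph{any} $j$ positions delivered by the retrieval pattern automatically form an information set and the final linear system is invertible. Here the residual code is $\calC_j = \RS_j(\bfx) + \langle\bfx^\alpha\rangle\star\RS_j(\bfx)$, which is not MDS in general (already for $j=1$, two coordinates $i,i'$ with $x_i^\alpha = x_{i'}^\alpha$ fail to be an information set); the preceding information-set lemma only secures the window $[2\alpha-2j+1, 2\alpha]$, whereas the retrieval pattern lands in cyclically shifted $2j$-subsets of $[2\alpha+1,n]$. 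Your first proposed fix is the natural one: strengthen the standing assumption on $\bfx$ so that, for each $1\le j\le\alpha$ and each stripe shift, the resulting $2j$-subset of $[2\alpha+1,n]$ is an information set of $\calC_j$. This is a finite family of nonvanishing-determinant constraints, satisfiable over any sufficiently large field, and sits naturally next to the reordering already imposed before the lemma.
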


\begin{proof}
  The proof is very similar to the one of Lemma~\ref{lem:correctness-first-columns}. Let us fix $1 \le \ell \le j$. 
  %Denote by $\bfx' = (x_{2\alpha-2j+1}, \dots, x_n)$.
  The user can build
  \[
    \bfR_\ell[\bcdot, j] \mydef (R_\ell[2\alpha-2j+1,j], \dots, R_\ell[n,j]) = \bfA_\ell[\bcdot, j] + \bfB_\ell[\bcdot, j]\,,
  \]
  where $\bfA_\ell[\bcdot, j]$ and $\bfB_\ell[\bcdot, j]$ are defined as in Lemma~\ref{lem:correctness-last-columns}.
  
  Denote $J_1 \mydef [0, j-1] \cup [\alpha, \alpha + j-1]$ and $J_2 \mydef [0, 2\alpha -1 ] \setminus J_1$. Both $J_1$ and $J_2$ are publicly known to the user and the servers, as they only depend on the parameters of the scheme.
  
  Define $\bfgamma_r \mydef (x_{2\alpha-2j+1}^r, \dots, x_n^r) \in \FF_q^{n-2\alpha+2j}$, for $0 \le r \le 2\alpha-1$. It is clear that $\{\bfgamma_r, r \in J_1 \}$ is a basis of the code $\calA_j$ defined above.  One can rewrite $\bfA_\ell[\bcdot, j] \in \FF_q^{n-2\alpha+2j}$ as follows:
  \[
  \begin{aligned}
    \bfA_\ell[\bcdot, j]
    &= \sum_{s,f} \bfD_\ell^f[\bcdot, s] \star \bfC^f[\bcdot, j, s]\\
    &= \sum_{s,f}  \bfD_\ell^f[\bcdot, s] \star \left(\sum_{r=1}^d M^f[r, j, s] \bfgamma_r \right)\\
    &= \sum_{r \in J_1} \sum_{s,f} M^f[r, j, s] \; \bfD_\ell^f[\bcdot, s] \star \bfgamma_r  + \sum_{r \in J_2} \sum_{s,f} M^f[r, j, s] \; \bfD_\ell^f[\bcdot, s] \star \bfgamma_r.
  \end{aligned}
  \]
  Therefore, using random vectors given in \eqref{eq:random-vectors-correctness-MSR},
  % \GREEN{Ragnar: Be precise, which random vectors and why? I guess $J_1$ is designed so that the user knows everything from it, but this is not clearly stated and also not immediately clear to me.}
  the vector
  \[
  \bfA'_\ell[\bcdot, j] \mydef \sum_{r \in J_2} \left( \sum_{s,f}  M^f[r, j, s] \; \bfD_\ell^f[\bcdot, s] \right) \star \bfgamma_r
  \]
  can be constructed by the user. Recall that for any file $X^f$,
  \[
  \bfM^f[r, j, \bcdot] = \bfM^f[j, r, \bcdot] = \bfM^f[\alpha+r, j, \bcdot]
  \]
  for every $1 \le r \le \alpha$. Hence, the user is able to construct
  \[
    \bfR''_\ell[\bcdot, j] \mydef \bfR_\ell[\bcdot, j] - \bfA'_\ell[\bcdot, j] =  (\bfA_\ell[\bcdot, j] - \bfA'_\ell[\bcdot, j]) + \bfB_\ell[\bcdot, j]
  \]
  and, by definition of $J_1$, we see that $\bfA''_\ell[\bcdot, j] \mydef \bfA_\ell[\bcdot, j] - \bfA'_\ell[\bcdot, j]$ lies in $\calA_j$. We remark that, once again, the vector $\bfB_\ell[\bcdot, j] \in \FF_q^{n-k+j}$ is supported  on $[2\alpha+1, n]$. According to the discussion preceding the lemma, the interval $I_j = [2\alpha-2j+1, 2\alpha]$ is an information set for $\calC_j$. Therefore the user can recover $\bfA''_\ell[\bcdot, j]$ and $\bfB_\ell[\bcdot, j]$ from $\bfR''_\ell[\bcdot, j]$.
  
  Finally, the recovery of $\bfM^{f_0}[\bcdot, j, \bcdot]$ and of random elements $\sum_{s,f} M^f[r,j,s] \lambda_{\ell, s, f}$ is identical to Lemma~\ref{lem:correctness-first-columns}.
\end{proof}

\begin{theorem}
  \label{thm:PIR-MSR}
  The scheme proposed in Section~\ref{subsec:PIR-MSR} is secure against non-colluding servers. Its PIR rate is
  \[
  R_{\rm MSR} = \frac{3(n - 2\alpha)}{3n - 2 \alpha + 2 }.
  \]
\end{theorem}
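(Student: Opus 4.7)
The plan is to verify the three requirements in turn: correctness, privacy, and the rate formula.

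For correctness, I would invoke Lemma~\ref{lem:correctness-MSR} by downward induction on the column index, from $j = \alpha$ down to $j = 1$. At $j = \alpha$ the index set $\{r : j+1 \le r \le \alpha\}$ is empty, so the hypothesis on already-known random vectors in~\eqref{eq:random-vectors-correctness-MSR} is vacuous, and the lemma directly yields $\bfM^{f_0}[\bcdot, \alpha, \bcdot]$ together with every random vector $\sum_{s,f} M^f[r, \alpha, s]\, \bfD_\ell^f[\bcdot, s]$ for $1 \le r \le 2\alpha$ and $1 \le \ell \le 2\alpha$. The key observation is then that, thanks to the MSR symmetries $\bfM^f[r, j, \bcdot] = \bfM^f[j, r, \bcdot] = \bfM^f[\alpha + r, j, \bcdot]$, any random vector collected at a previously processed column $j' = r > j$ with row index $j$ coincides with the quantity $\sum_{s,f} M^f[r, j, s]\, \bfD_\ell^f[\bcdot, s]$ needed as input to Lemma~\ref{lem:correctness-MSR} at step $j$ (once $\ell$ is restricted to $[1, 2j] \subset [1, 2r]$). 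Iterating down to $j = 1$ therefore reconstructs $\bfM^{f_0}[\bcdot, j, \bcdot]$ for every $j$, and hence the file $\bfX^{f_0}$.

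For privacy, the argument is pointwise in the server index: each $\bfD_\ell^f[\bcdot, s]$ is drawn independently and uniformly from the repetition code of length $n$, so the marginal distribution of $\bfQ[i, \bcdot] = \bfD[i, \bcdot] + \bfE^{(f_0)}[i, \bcdot]$ is the same as that of $\bfD[i, \bcdot]$, independent of $f_0$. Therefore $H(f_0 \mid \bfQ[i, \bcdot]) = H(f_0)$ for every $i$, which is the non-collusion privacy condition.

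For the rate, I would simply count the total number of symbols exchanged. The file $\bfX^{f_0}$ contains $SB = (n-2\alpha)\alpha(\alpha+1)$ symbols over $\FF_q$. On column $j$, the $n - 2\alpha + 2j$ servers indexed by $[2\alpha - 2j + 1, n]$ each reply with $2j$ symbols, so summing over $j$ the total download is
\[
\sum_{j=1}^{\alpha} 2j(n - 2\alpha + 2j) = (n - 2\alpha)\alpha(\alpha+1) + \frac{2\alpha(\alpha+1)(2\alpha+1)}{3} = \frac{\alpha(\alpha+1)(3n - 2\alpha + 2)}{3}.
\]
Dividing the file size by this sum yields $R_{\rm MSR} = 3(n - 2\alpha)/(3n - 2\alpha + 2)$, as claimed. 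The only delicate step is the bookkeeping for the random vectors in the induction: one must check that at every step $j$ each quantity $\sum_{s,f} M^f[r, j, s]\, \bfD_\ell^f[\bcdot, s]$ with $r \in [j+1, \alpha]$ and $\ell \in [1, 2j]$ has already been produced at an earlier step. The MSR row/column swap $M^f[r, j, \bcdot] = M^f[j, r, \bcdot]$ identifies the missing vector with one obtained at the previously processed column $j' = r > j$ (with row index $j$), and the repetition-code structure of $\bfD^f_\ell$ guarantees that the corresponding sum depends only on $(\ell, s, f)$ and not on the column through which it was first observed. Once this is checked, the remainder is routine.
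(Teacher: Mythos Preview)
Your proof is correct and follows essentially the same approach as the paper: correctness via Lemma~\ref{lem:correctness-MSR}, privacy from the uniform distribution of $\bfQ[i,\bcdot]$ exactly as in the MBR case, and the same download count $\sum_{j=1}^{\alpha} 2j(n-2\alpha+2j)$ for the rate. If anything, you spell out the downward induction and the propagation of the random vectors more carefully than the paper does, whereas the paper simply refers back to Lemma~\ref{lem:correctness-MSR} and computes the sum via the alternative intermediate form $n\alpha(\alpha+1) - 4\sum_{j=1}^{\alpha} j(\alpha-j)$.
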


\begin{proof}
  We have seen in Lemma~\ref{lem:correctness-MSR} that the proposed scheme reconstructs the correct file. Similarly to the MBR case, the scheme is private if servers do not collude. Let us compute the PIR rate.

  The desired file consists of $(n - 2\alpha) B = \alpha(\alpha+1)(n - 2\alpha)$ symbols. For column $1 \le j \le \alpha$, the number of downloaded symbols is $2j \times (n - 2\alpha + 2j)$. Hence the PIR rate of the scheme is given by
\[
\begin{aligned}
R_{\rm MSR} &= \frac{\alpha(\alpha+1)(n - 2\alpha)}{\sum_{j=1}^\alpha2j (n - 2\alpha + 2j) }\\
&=\frac{\alpha(\alpha+1)(n - 2\alpha)}{n\alpha(\alpha+1) - 4 \sum_{j=1}^\alpha j  (\alpha - j) }\\
&=\frac{\alpha(\alpha+1)(n - 2\alpha)}{n\alpha(\alpha+1) - \frac{2}{3} \alpha(\alpha+1)(\alpha-1) }\\
&=\frac{3(n - 2\alpha)}{3n - 2 \alpha + 2 }\\
&= 1 - \frac{4\alpha +2}{3n - 2\alpha +2}\,.
\end{aligned}
\]
\end{proof}

\subsection{On the PIR rate in the MSR case}

In our simplified setting, it must hold that $\alpha = d/2 = k-1$. The PIR rate of the proposed scheme is then 
\[
R_{\rm MSR} = 1 - \frac{4\alpha +2}{3n - 2\alpha +2}\,.
\]
Dorkson and Ng~\cite{DorksonN18} give a multi-file PIR scheme for the same MSR codes, with a PIR rate of $\frac{\alpha(n-d)}{\alpha n} = 1 - d/n$. We prove in the following lemma that the PIR rate of our construction improves upon this rate.

\begin{lemma}
  Let $1 \le \alpha \le n/2$ and assume that $n \ge 6$ or $\alpha \ge 3$. Then:
  \[
  1 - \frac{d}{n} \le R_{\rm MSR} \le 1 - \frac{k}{n}\,.
  \]
\end{lemma}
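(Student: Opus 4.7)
The plan is to substitute the relations $d = 2\alpha$ and $k = \alpha+1$ (both forced by the simplifying assumption $d = 2k-2$ made in Section~\ref{subsubsec:def-MSR}) into the closed form $R_{\rm MSR} = 1 - \frac{4\alpha+2}{3n-2\alpha+2}$ from Theorem~\ref{thm:PIR-MSR}, and then to reduce each of the two desired bounds to a polynomial inequality in $n$ and $\alpha$ that factors transparently.

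For the lower bound $1-\frac{d}{n}\le R_{\rm MSR}$, after substitution this becomes $\frac{4\alpha+2}{3n-2\alpha+2}\le\frac{2\alpha}{n}$. Both denominators are strictly positive under the hypothesis $1\le \alpha\le n/2$ (indeed $3n-2\alpha+2\ge 2n+2>0$), so I would clear them and collect terms. After routine expansion and cancellation I expect the inequality to reduce to the factored form $2(\alpha-1)(n-2\alpha)\ge 0$, which is non-negative since $\alpha\ge 1$ and $n\ge 2\alpha$ are both part of the assumptions.

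The upper bound $R_{\rm MSR}\le 1-\frac{k}{n}$ would be handled symmetrically: after substitution it reads $\frac{\alpha+1}{n}\le\frac{4\alpha+2}{3n-2\alpha+2}$. Clearing denominators and grouping by powers of $\alpha$ should produce the factorization $(\alpha-1)(n+2\alpha+2)\ge 0$, which is obvious since $\alpha\ge 1$ and $n+2\alpha+2>0$ always.

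The main ``obstacle'' is really only spotting those two factorizations; the remaining algebra is routine book-keeping. It is worth remarking that the extra hypothesis ``$n\ge 6$ or $\alpha\ge 3$'' does not appear to be needed for the bare inequalities --- they hold with equality (rather than failing) in the corner cases $\alpha=1$, where $k=d$ forces all three quantities to coincide, and $n=2\alpha$, where $R_{\rm MSR}=1-d/n=0$. I would therefore flag in the write-up that this hypothesis likely exists to exclude degenerate MSR parameter regimes (for instance, zero stripes $S=n-2\alpha=0$, or trivial codes with $k=d$) rather than to make the inequalities themselves valid.
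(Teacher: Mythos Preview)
Your approach is correct and essentially identical to the paper's: substitute $d=2\alpha$, $k=\alpha+1$ into the closed form for $R_{\rm MSR}$, clear denominators, and factor. Your factorizations $2(\alpha-1)(n-2\alpha)\ge 0$ and $(\alpha-1)(n+2\alpha+2)\ge 0$ are both correct (indeed cleaner than the paper's stated condition for the upper bound, which appears to contain an arithmetic slip), and your observation that the hypothesis ``$n\ge 6$ or $\alpha\ge 3$'' is not actually used to establish the inequalities is well-founded.
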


\begin{proof}
  For the left-hand side inequality, we need to prove that $d/n \ge (2d+2)(3n-2d+2)$. A simple computation shows it is equivalent to $(d-2)(n-d) \ge 0$, which holds as long as $\alpha = d/2 \ge 1$.

  Similarly, the right-hand side inequality $R_{\rm MSR} \le 1 - \frac{k}{n}$ holds if and only if $\frac{n(\alpha-3)}{2} + (\alpha+1)(\alpha-1) \ge 0$, which proves our result.
\end{proof}

\begin{figure}
  \centering

  \begin{tikzpicture}
    
    \def\n{40.0}
    \def\amin{0}
    \def\amax{\n/2}

    \pgfplotsset{every tick label/.append style={font=\footnotesize}}
    
    \begin{axis}[
      xmin=\amin,
      xmax=\amax,
      ymin=0,
      ymax=1,
      xlabel={$\alpha = d/2 = k-1$},
      ylabel={PIR rate},
      xtick={5, 10, ..., 20},
      xlabel style={anchor = north, at={(0.5,-0.08), font=\footnotesize}},
      ylabel style={anchor = north, at={(-0.16,0.5), font=\footnotesize}},
      legend style= {anchor = south west, at={(0.05,0.05), font=\scriptsize}},
      cycle list name=mark list*
      ]      

      \tikzstyle{my_style}=[domain={\amin+1}:{\amax-1}, mark options={scale=0.7}, samples={\amax-\amin-1}]

      \addplot+[my_style, color=violet]
      % [domain=\k+1:\n-1, color=violet!50!black, mark=square, mark options={scale=0.5, green!50!black}, samples=\n-\k-1]
      {1 - (4*x+2)/(3*\n-2*x+2)}; 
      \addlegendentry{Scheme in Sec.~\ref{subsec:PIR-MSR}}

      \addplot+[my_style, color=green!60!black]
      % [domain=\k+1:\n-1, color=blue!50!black, mark=square, mark options={scale=0.5, blue!50!black}, samples=\n-\k-1]
      {1-(2*x)/\n};      
      \addlegendentry{Scheme in~\begin{NoHyper}\cite{DorksonN18}\end{NoHyper}: $1 - \frac{d}{n}$}
      
      \addplot+[my_style, color=blue!80!black]
      % [domain=\k+1:\n-1, color=blue!50!black, mark=square, mark options={scale=0.5, blue!50!black}, samples=\n-\k-1]
      {1-(x+1)/\n};      
      \addlegendentry{$1 - \frac{k}{n}$}

      % \addplot+[my_style, color=orange!80!black]
      % % [domain=\k+1:\n-1, color=orange!50!black, mark=square, mark options={scale=0.5, orange!50!black}, samples=\n-\k-1]
      % {1 - (x+1)/(2*\n))}; 
      % \addlegendentry{$1 - \frac{B}{2 \alpha n}$}

      % \addplot+[my_style, color=red]
      % % [domain=\k+1:\n-1, color=orange!50!black, mark=square, mark options={scale=0.5, orange!50!black}, samples=\n-\k-1]
      % {1 - (x*(x+1) - (x-1))/(2*x*\n)}; 
      % \addlegendentry{$1 - \frac{B-(\alpha-1)}{2 \alpha n}$}

    \end{axis} 
  \end{tikzpicture}
  \caption{PIR rate versus $\alpha$ in the MSR case, for fixed $n=40$. Recall that $2\alpha = d = 2k-2$ must hold.}  
  \label{fig:comparison-MSR}
\end{figure}
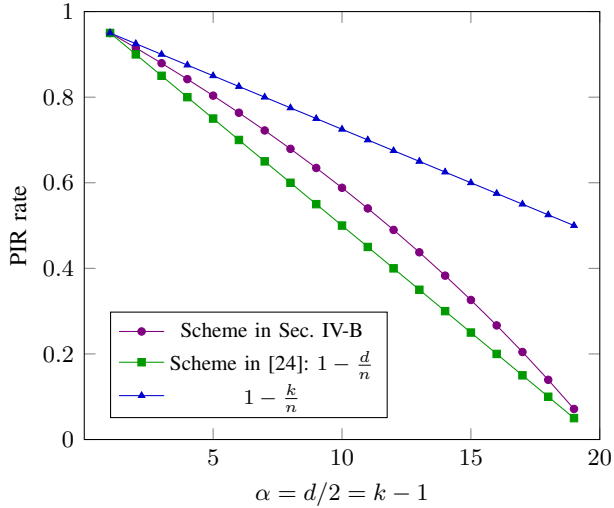

\section{Conclusion}

In this paper, we construct PIR schemes for the product matrix constructions in the MBR and MSR settings. The schemes use the symmetric properties of the PM codes in order to increase the PIR rate. For the PM-MBR setting, we achieve a PIR rate that is better than $1-k/n$, \emph{i.e.}, larger than the PIR capacity of an $[n,k]$ MDS coded storage system. As for the PM-MSR setting, we achieve a PIR rate between $1-d/n$, \emph{i.e.}, the PIR capacity of an $[n,d]$ MDS code, and $1-k/n$.

A possible further work on the topic would be to consider colluding servers. A natural idea is to adapt the constructions of Freij-Hollanti \emph{et al.}~\cite{FreijGHK17, freij2018t}, by replacing the repetition code where random vectors $\bfD^f_\ell[\bcdot, s]$ are picked, by a Reed-Solomon code of higher dimension. However, the extraction of the randomness --- necessary to decrease the communication 
cost of our schemes --- cannot be done as easily as in the non-colluding case, because projected random symbols interfere with themselves.

 \section*{Acknowledgments}
  The work of J. Lavauzelle is partially funded by French ANR-15-CE39-0013-01 \enquote{Manta}.\\
  The work of R. Tajeddine and C. Hollanti is supported in part by the Academy of Finland, under grants \#276031, \#282938, and \#303819 to C.~Hollanti, and by the Technical University of Munich -- Institute for Advanced Study, funded by the German Excellence Initiative and the EU 7th Framework Programme under grant agreement \#291763, via a \emph{Hans Fischer Fellowship} held by C.~Hollanti.\\
  The work of R. Freij-Hollanti is supported by the German Research Foundation (Deutsche Forschungsgemeinschaft, DFG) under Grant WA3907/1-1. 

%\RED{cami: In bib, ref [13] is missing info. Capitalization is not consistent, [22] not capitalized.} 
\bibliographystyle{ieeetr}
\bibliography{coding2}

\end{document}